\title{Bisimulation Metrics for Weighted Automata}
\author[1]{Borja Balle}
\author[2]{Pascale Gourdeau}
\author[3]{Prakash Panangaden}
\affil[1]{Department of Mathematics and Statistics, Lancaster University, Lancaster, U.K.\\
  \texttt{b.deballepigem@lancaster.ac.uk}}
\affil[2]{School of Computer Science, McGill University, Montreal, Quebec, Canada\\
  \texttt{pascale.gourdeau@mail.mcgill.ca}}
\affil[3]{School of Computer Science, McGill University, Montreal, Quebec, Canada\\
  \texttt{prakash@cs.mcgill.ca}}
\newcommand{\R}{\mathbb{R}}
\newcommand{\N}{\mathbb{N}}
\newcommand{\cS}{\mathcal{S}}
\newcommand{\sstar}{\Sigma^{\star}}
\newcommand{\somega}{\Sigma^{\omega}}
\newcommand{\sinfty}{\Sigma^{\infty}}
\newcommand{\wa}{\langle \Sigma, V, \alpha, \beta, \{\tau_{\sigma}\}_{\sigma \in \Sigma} \rangle}
\newcommand{\hwa}{\langle \Sigma, V, \hat{\alpha}, \hat{\beta}, \{\hat{\tau}_{\sigma}\}_{\sigma \in \Sigma} \rangle}
\newcommand{\wav}{\langle \Sigma, V, v, \beta, \{\tau_{\sigma}\}_{\sigma \in \Sigma} \rangle}
\newcommand{\waw}{\langle \Sigma, V, \alpha, w, \{\tau_{\sigma}\}_{\sigma \in \Sigma} \rangle}
\newcommand{\war}{\langle \Sigma, V^*, \beta, \alpha, \{\tau_{\sigma}^\top\}_{\sigma \in \Sigma} \rangle}
\newcommand{\wan}{\langle \Sigma, V, \alpha_i, \beta_i, \{\tau_{i,\sigma}\}_{\sigma \in \Sigma} \rangle}
\newcommand{\wai}{\langle \Sigma, V_i, \alpha_i, \beta_i, \{\tau_{i,\sigma}\}_{\sigma \in \Sigma} \rangle}
\newcommand{\waai}{\langle \Sigma, V, \alpha, \beta, \tau_{i} \rangle}
\newcommand{\waa}{\langle \Sigma, V, \alpha, \beta, \tau \rangle}
\newcommand{\wap}{\langle \Sigma, V, \alpha', \beta', \{\tau_{\sigma}'\}_{\sigma
 \in \Sigma} \rangle}
\newcommand{\umdp}{\langle \Sigma, Q, \alpha, \{\beta_\sigma\}_{\sigma \in \Sigma}, \{T_{\sigma}\}_{\sigma \in \Sigma}, \gamma \rangle}
\newcommand{\umdpb}{\langle \Sigma, Q, \alpha, \beta, \{T_{\sigma}\}_{\sigma \in \Sigma}, \gamma \rangle}
\newcommand{\waumdp}{\langle \Sigma, \R^Q, \alpha, \beta, \{\tau_{\sigma}\}_{\sigma \in \Sigma} \rangle}
\newcommand{\norm}[1]{\|#1\|}
\newcommand{\bignorm}[1]{\left\|#1\right\|}
\newcommand{\spn}{\mathop{span}}
\newtheorem{theorem}{Theorem}
\newtheorem{lemma}[theorem]{Lemma}
\newtheorem{definition}[theorem]{Definition}
\newtheorem{corollary}[theorem]{Corollary}
\theoremstyle{remark}
\begin{document}

\maketitle

\begin{abstract}
  We develop a new bisimulation (pseudo)metric for weighted finite automata
  (WFA) that generalizes Boreale's linear bisimulation relation. Our metrics
  are induced by seminorms on the state space of WFA. Our development is based on spectral
  properties of sets of linear operators. In particular, the joint spectral
  radius of the transition matrices of WFA plays a central role. We also study
  continuity properties of the bisimulation pseudometric, establish an
  undecidability result for computing the metric, and give a preliminary
  account of applications to spectral learning of weighted automata.
\end{abstract}

\section{Introduction}
Weighted finite automata (WFA) form a fundamental computational model that
subsumes probabilistic automata and various other types of quantitative
automata.  They are much used in machine learning and natural language
processing, and are certainly relevant to quantitative verification and to
the theory of control systems \cite{DrosteKuich2009}. The theory of minimization of weighted
finite automata goes back to Sch\"utzenberger~\cite{Schutzenberger61} which
implicitly exploits duality as made explicit in~\cite{Bonchi14}.
In~\cite{Balle15} we began studying \emph{approximate} minimization of
WFA by using spectral methods. The idea there was to obtain automata for a
given weighted language, smaller than the minimal possible which, of
course, means that the automaton constructed does not \emph{exactly}
recognize the given weighted language but comes ``close enough.''  

In \cite{Balle15} the notion of proximity to the desired language was
captured by an $\ell_2$ distance. However, a powerful technique for
understanding approximate behavioural equivalence is by using more general
\emph{behavioural metrics}.  In particular, with a behavioural pseudometric
we recover bisimulation as the kernel.  Such behavioural metrics for Markov
processes were proposed by Giacalone et al.~\cite{Giacalone90} and the
first successful pseudometric that has bisimulation as its kernel is due to
Desharnais et al.~\cite{Desharnais99b,Desharnais04}; see
\cite{Panangaden09} for an expository account.  The subject was greatly
developed by van Breugel and Worrell~\cite{vanBreugel01a} among others.
For WFA, a beautiful treatment of linear bisimulation relations was given
by Boreale~\cite{Boreale09}.  We were motivated to develop a metric
analogue of Boreale's linear bisimulation with the eventual goal of using
it to analyze approximate minimization.  In the present paper we develop
the general theory of bisimulation (pseudo)metrics for WFA (and for
weighted languages) deferring the application to approximate minimization
to future work.

It turns out that in the linear algebraic setting appropriate to WFA it is
a (semi)norm rather than a (pseudo)metric that is the fundamental quantity
of interest.  Indeed, as one might expect, in a vector space setting norms
and seminorms are the natural objects from which metrics and pseudometrics
can be derived. The bisimulation metric that we construct actually comes
from a bisimulation seminorm which is obtained, as usual, using the Banach
fixed-point theorem. Interestingly, we also provide a closed-form
expression for the fixed point bisimulation seminorm and use it to study
several of its properties.

Our main contributions are:
\begin{enumerate}
\item The construction of bisimulation seminorms and the associated pseudometric
  on WFA (Section~\ref{sec:metrics}). The existence of the fixed point depends on some delicate
  applications of spectral theory, specifically the joint spectral radius
  of a set of matrices.
\item We obtain metrics on the space of weighted languages from the metrics
  on WFA (Section~\ref{sec:metrics}).
\item We show two continuity properties of the metric; one using definitions due to
  Jaeger et al.~\cite{Jaeger14} and the other developed here (Section~\ref{sec:continuity}).
\item We show undecidability results for computing our metrics (Section~\ref{sec:hardness}).
\item Nevertheless, we show that one can successfully exploit these metrics for
  applications in machine learning (Section~\ref{sec:learning}).
\end{enumerate}

The metric of the present paper led naturally to some sophisticated
topological and spectral theory arguments which one would not have
anticipated from the treatment of linear bisimulation in \cite{Boreale09}.

\section{Background}

In this section we recall preliminary definitions and results that will be
used throughout the rest of the paper. We assume the reader is familiar
with norms and vector spaces; these topics are
reviewed in Appendix~\ref{sec:appback}. Here we discuss Boreale's linear bisimulation
relations for weighted automata and provide a short primer on the
joint spectral radius of a set of linear operators.

\subsection{Strings and Weighted Automata}

Given a finite alphabet $\Sigma$ we let $\sstar$ denote the set of all
finite strings with symbols in $\Sigma$ and let $\sinfty$ denote the set of
all infinite strings with symbols in $\Sigma$ and we write
$\somega = \sstar \cup \sinfty$.  The length of a string $x \in \somega$ is
denoted by $|x|$; $|x| = \infty$ whenever $x \in \sinfty$.  Given a string
$x \in \somega$ and an integer $0 \leq t \leq |x|$ we write $x_{\leq t}$ to
denote the prefix containing the first $t$ symbols from $x$, with
$x_{\leq 0} = \epsilon$.  Given an integer $t \geq 0$ we will write
$\Sigma^t$ (resp.\ $\Sigma^{\leq t}$) for the set of all strings with
length equal to (resp.\ at most) $t$.
The reverse of a finite string
$x = x_1 x_2 \cdots x_t$ is given by $\bar{x} = x_t x_{t-1} \cdots x_1$.

We only consider automata with weights in the real field $\R$.  We will mostly be concerned with properties of weighted automata that are invariant under change of basis. Accordingly, our presentation uses weighted automata whose state space is an abstract real vector space.

A weighted finite automaton (WFA) is a tuple $A = \wa$ where $\Sigma$ is a
finite alphabet, $V$ is a finite-dimensional vector space, $\alpha \in V$
is a vector representing the initial weights, $\beta \in V^*$ is a linear
form representing the final weights, and $\tau_\sigma : V \to V$ is a
linear map representing the transition indexed by $\sigma \in \Sigma$.  The
vectors in $V$ are called states of $A$.  We shall denote by
$n = \dim(A) = \dim(V)$ the dimension of $A$. The transition maps
$\tau_\sigma$ can be extended to arbitrary finite strings in the obvious
way.

A weighted automaton $A = \wa$ computes the function $f_A : \sstar \to \R$ (sometimes also referred to as the weighted language in $\R^{\sstar}$ recognized by $A$) given by $f_A(x) = \beta(\tau_x(\alpha))$.  Given a WFA $A = \wa$ and a state $v \in V$ we define the weighted automaton $A_v = \wav$ obtained from $A$ by taking $v$ as the initial state. We call $f_{A_v}$ the function realized by state $v$. Similarly, give a linear form $w \in V^*$ we define the weighted automaton $A^w = \waw$ where the final weights are replaced by $w$.
The reverse of a weighted automaton $A$ is $\bar{A} = \war$, where
$\tau_\sigma^\top : V^* \to V^*$ is the transpose of $\tau_\sigma$.  It is
easy to check that the function computed by $\bar{A}$ satisfies
$f_{\bar{A}}(x) = f_A(\bar{x})$ for all $x \in \sstar$.

\subsection{Linear Bisimulations}
\label{subsec:linbisim}

Linear bisimulations for weighted automata were introduced by Boreale in \cite{Boreale09}. Here we recall the key definition and several important facts.

\begin{definition}\label{def:bisimrel}
A \emph{linear bisimulation} for a weighted automaton $A = \wa$ on a vector space $V$ is a linear subspace $W \subseteq V$ satisfying the following two conditions:
\begin{enumerate}
\item $\beta(v) = 0$ for all $v \in W$; that is, $W \subseteq \ker(\beta)$, and
\item $W$ is invariant by each $\tau_\sigma$; that is, $\tau_\sigma(W) \subseteq W$ for all $\sigma \in \Sigma$.
\end{enumerate}
Furthermore, two states $u, v \in V$ are called \emph{$W$-bisimilar} if $u - v \in W$.
\end{definition}

In particular, the trivial subspace $W = \{0\}$ is always a linear bisimulation. The notion of $W$-bisimilarity induces an equivalence relation on $V$ which we will denote by $\sim_W$. The kernel of an equivalence relation $\sim$ on a vector space $V$ is the set of vectors in the equivalence class of the null vector: $\ker(\sim) = \{ v \in V : v \sim 0 \}$. It is immediate from the definition that for any bisimulation relation $\sim_W$ we have $\ker(\sim_W) = W$.

Given a weighted automaton $A$ we say that $u, v \in V$ are $A$-bisimilar
if there exists a bisimulation $W$ for $A$ such that $u \sim_W v$. The
corresponding equivalence relation is denoted by $\sim_A$. Boreale showed
in \cite{Boreale09} that for every WFA $A$ there exists a bisimulation $W_A$
such that $\sim_{W_A}$ exactly coincides with $\sim_A$, and the
bisimulation can be obtained as $W_A = \ker(\sim_A)$. He also showed that
$W_A$ is in fact the largest linear bisimulation for $A$ in the sense that
any other linear bisimulation $W$ for $A$ must be a subspace of
$W_A$. Accordingly, we shall refer to the relation $\sim_A$ and the subspace $W_A$ as $A$-bisimulation.

Note that the subspaces considered in Definition~\ref{def:bisimrel} are independent of the initial state $\alpha$ of $A$. In fact, $A$-bisimilarity can be understood as a relation between possible initial states for $A$.
Indeed, using the definition of $\sim_A$ it is immediate to check that for
any states $u, v \in V$ we have $u \sim_A v$ if and only if
$f_{A_u} = f_{A_v}$. This implies that in a WFA where the bisimulation $W_A$
corresponding to $\sim_A$ satisfies $W_A = \{0\}$ every state realizes a
different function. Such an automaton is called observable. A weighted
automaton is called reachable if the reverse $\bar{A}$ is observable.

A weighted automaton $A$ is minimal if for any other weighted automaton
$A'$ over the same alphabet such that $f_A = f_{A'}$ we have
$\dim(A) \leq \dim(A')$. It is also shown in \cite{Boreale09} that linear
bisimulations can be used to characterize minimality, in the sense that $A$
is minimal if and only if it is observable and reachable.

\subsection{Joint Spectral Radius}\label{sec:jsr}

The joint spectral radius of a set of linear operators is a natural generalization of the spectral radius of a single linear operator. The joint spectral radius and several equivalent notions have been thoroughly studied since the 1960's. These radiuses arise in many fundamental problems in operator theory, control theory, and computational complexity. See \cite{Jungers09} for an introduction to their properties and applications. Here we recall the basic definitions and some important facts related to quasi-extremal norms.

\begin{definition}
The \emph{joint spectral radius} of a collection $M = \{\tau_i\}_{i \in I}$ of linear maps $\tau_i : V \to V$ on a normed vector space $(V,\norm{\cdot})$ is defined as
\begin{equation*}
\rho(M) = \limsup_{t \to \infty} \left(\sup_{T \in I^t} \bignorm{\prod_{i \in T} \tau_i}\right)^{1/t} = \lim_{t \to \infty} \left(\sup_{T \in I^t} \bignorm{\prod_{i \in T} \tau_i}\right)^{1/t} \enspace.
\end{equation*}
\end{definition}

The second equality above is a generalization of Gelfand's formula for the spectral radius of a single operator due to Daubechies and Lagarias \cite{Daubechies92,Daubechies01}. An important fact about the joint spectral radius is that $\rho(M)$ is independent of the norm $\norm{\cdot}$, i.e.\ one obtains the same radius regardless of the norm given to the vector space $V$. The joint spectral radius behaves nicely with respect to direct sums, in the sense that given two sets of operators $M = \{\tau_i\}_{i \in I}$ and $M' = \{\tau_i'\}_{i \in I}$, then $\rho(\{\tau_i \oplus \tau_i'\}_{i \in I}) = \max\{\rho(M),\rho(M')\}$.

The notion of joint spectral radius can be readily extended to weighted automata. Let $A = \wa$ be a weighted automaton with states on a normed vector space $(V,\norm{\cdot})$. Then the spectral radius of $A$ is defined as $\rho(A) = \rho(M)$ where $M = \{\tau_\sigma\}_{\sigma \in \Sigma}$. In this case the definition above can be rewritten as
\begin{equation*}
\rho(A) = \lim_{t \to \infty} \left(\sup_{x \in \Sigma^t} \norm{\tau_x}\right)^{1/t} \enspace.
\end{equation*}

Now we discuss several fundamental properties of the joint spectral radius that will play a role in the rest of the paper. Like in the case of the classic spectral radius, the joint spectral radius is upper bounded by the norms of the operators in $M$: $\rho(M) \leq \sup_{i \in I} \norm{\tau_i}$. Obtaining lower bounds for $\rho(M)$ is a major problem directly related to the hardness of computing approximations to $\rho(M)$. An approach often considered in the literature is to search for extremal norms. A norm $\norm{\cdot}$ on $V$ is extremal for $M$ if the corresponding induced norm satisfies $\norm{\tau_i} \leq \rho(M)$ for all $i \in I$. This immediately implies that given an extremal norm for $M$ we have $\rho(M) = \sup_{i \in I} \norm{\tau_i}$. Conditions on $M$ guaranteeing the existence of an extremal norm have been derived by Barabanov and others; see \cite{Wirth02} and references therein. However, most of these conditions are quite technical and algorithmically hard to verify. On the other hand, if one only insists on approximate extremality, the following result, which is not constructive, due to Rota and Strang guarantees the existence of such norms for any set of matrices $M$ that is compact with respect to the topology generated by the operator norm in $V$.

\begin{theorem}[\cite{Rota60}]\label{thm:quasiextremal}
Let $M = \{\tau_i\}_{i \in I}$ be a compact set of linear maps on $V$. For any $\eta > 0$ there exists a norm $\norm{\cdot}$ on $V$ that satisfies $\norm{\tau_i(v)} \leq (\rho(M) + \eta) \norm{v}$ for every $i \in I$ and every $v \in V$.
\end{theorem}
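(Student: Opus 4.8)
The plan is to construct the desired norm explicitly, as a ``weighted supremum over all finite products'' built from an arbitrary reference norm. Fix $\eta > 0$ and put $\gamma = \rho(M) + \eta$, so that $\gamma > 0$ and $\gamma > \rho(M)$. Starting from any norm $\norm{\cdot}$ on $V$, I would define
\begin{equation*}
\norm{v}_\star \;=\; \sup_{t \geq 0}\; \sup_{x \in I^t}\; \frac{\norm{\tau_x(v)}}{\gamma^t} \enspace,
\end{equation*}
where $\tau_x$ denotes the composition of the transition maps indexed by the symbols of $x$, and the $t = 0$ term is read as $\norm{v}$ (empty product $=$ identity). Intuitively $\norm{\cdot}_\star$ is the smallest norm dominating $\norm{\cdot}$ for which every $\tau_i$ acts as a $\gamma$-contraction, so quasi-extremality will hold essentially by construction; the work is in showing this supremum is finite and defines a genuine norm.

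First I would verify that $\norm{v}_\star < \infty$ for every $v$, which is the step where the hypotheses are actually used. Since $M$ is compact it is bounded, say $\norm{\tau_i} \leq K$ for all $i \in I$, hence $\sup_{x \in I^t}\norm{\tau_x} \leq K^t$ for every $t$. Moreover, because $\gamma > \rho(M)$, the definition of the joint spectral radius gives an $N$ such that $\sup_{x \in I^t}\norm{\tau_x} \leq \gamma^t$ for all $t \geq N$. Combining these, the constant $C = \max\{1,\, \max_{0 \leq t < N} K^t/\gamma^t\}$ is finite and satisfies $\sup_{x \in I^t}\norm{\tau_x} \leq C\gamma^t$ for every $t \geq 0$, so $\norm{v}_\star \leq C\norm{v} < \infty$. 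Together with the trivial lower bound $\norm{v}_\star \geq \norm{v}$ (the $t = 0$ term), this shows $\norm{\cdot}_\star$ is positive definite, and in fact equivalent to $\norm{\cdot}$; absolute homogeneity and the triangle inequality are inherited termwise, so $\norm{\cdot}_\star$ is a norm on $V$.

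It remains to check the quasi-extremality estimate. Fix $j \in I$ and $v \in V$. The key observation is that composing a length-$t$ product $\tau_x$ with one more map $\tau_j$ produces a length-$(t+1)$ product, i.e.\ $\{\tau_x \circ \tau_j : x \in I^t\} \subseteq \{\tau_y : y \in I^{t+1}\}$. Hence
\begin{equation*}
\norm{\tau_j(v)}_\star \;=\; \sup_{t \geq 0}\sup_{x \in I^t}\frac{\norm{(\tau_x \circ \tau_j)(v)}}{\gamma^t} \;=\; \gamma\,\sup_{t \geq 0}\sup_{x \in I^t}\frac{\norm{(\tau_x \circ \tau_j)(v)}}{\gamma^{t+1}} \;\leq\; \gamma\,\sup_{s \geq 1}\sup_{y \in I^s}\frac{\norm{\tau_y(v)}}{\gamma^s} \;\leq\; \gamma\,\norm{v}_\star \enspace,
\end{equation*}
the last inequality because the supremum over $s \geq 1$ is bounded by the supremum over $s \geq 0$, which is exactly $\norm{v}_\star$. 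Since $\gamma = \rho(M) + \eta$, this gives $\norm{\tau_j(v)}_\star \leq (\rho(M) + \eta)\norm{v}_\star$ for all $j$ and $v$, as required.

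I expect the only genuine obstacle to be the well-definedness (finiteness) of $\norm{\cdot}_\star$: this is precisely where both assumptions enter — boundedness of $M$, which follows from compactness, controls the finitely many short products, while $\rho(M) < \gamma$ together with the definition of the joint spectral radius controls all sufficiently long ones. Once $\norm{\cdot}_\star$ is known to be a bona fide norm, the contraction property is a short bookkeeping argument about concatenating strings, and the rest (homogeneity, triangle inequality, positive definiteness) is routine.
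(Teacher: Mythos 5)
Your proof is correct. Note that the paper does not actually prove this statement: it is quoted as a special case of Proposition~1 of Rota and Strang, with a proof for finite $M$ attributed to Blondel and Nesterov. Your construction is precisely the standard one from those references — the ``scaled trajectory norm'' $\norm{v}_\star = \sup_{t \geq 0} \sup_{x \in I^t} \gamma^{-t}\norm{\tau_x(v)}$ with $\gamma = \rho(M)+\eta$ — and all the steps you give are sound: finiteness follows from boundedness of $M$ (to control the finitely many products of length $< N$) together with $\limsup_t (\sup_{x \in I^t}\norm{\tau_x})^{1/t} = \rho(M) < \gamma$ (to get $\sup_{x\in I^t}\norm{\tau_x} \leq \gamma^t$ for $t \geq N$); the $t=0$ term gives $\norm{v}_\star \geq \norm{v}$, so positive definiteness and even equivalence with the reference norm hold; homogeneity and subadditivity pass through the supremum of seminorms; and the shift argument $\tau_x \circ \tau_j \in \{\tau_y : y \in I^{t+1}\}$ yields $\norm{\tau_j(v)}_\star \leq \gamma \norm{v}_\star$. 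One small observation: your argument only uses that $M$ is bounded, not that it is compact (compactness enters elsewhere in the paper, e.g.\ for continuity of $\rho$), so your proof is in fact marginally more general than the stated hypothesis and covers the compact case a fortiori.
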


The statement above is in fact a special case of Proposition 1 in \cite{Rota60}; a proof for finite sets $M$ can be found in \cite{blondel2005accuracy}. An important result due to Barabanov \cite{barabanov1988lyapunov} states that the function $M \mapsto \rho(M)$ defined on compact sets of operators is continuous (see also \cite{heil1995continuity}). Another result that we will need was again proved by Barbanov in \cite{barabanov1988lyapunov} and it states that if $M$ is a bounded set of linear operators and $\bar{M}$ denotes its closure then $\rho(M) = \rho(\bar{M})$. Note that if $M$ is bounded then its closure $\bar{M}$ is compact by the Heine--Borel theorem.

A special case which makes the joint spectral radius easier to work with is when the set of matrices $M$ is irreducible. A set of linear maps $M$ is called \emph{irreducible} if the only subspaces $W \subseteq V$ such that $\tau_i(W) \subseteq W$ for all $i \in I$ are $W = \{0\}$ and $W = V$. If there exists a non-trivial subspace $W \subset V$ invariant by all $\tau_i$ we say that $M$ is reducible. In fact, almost all sets of matrices are irreducible in following sense. The Hausdorff distance between two sets of linear maps $M$ and $M'$ on the same normed vector space $(V,\norm{\cdot})$ is given by
\begin{equation*}
d_H(M,M') = \max\left\{ \sup_{\tau \in M} \inf_{\tau' \in M'} \norm{\tau - \tau'}, \sup_{\tau' \in M'} \inf_{\tau \in M} \norm{\tau - \tau'} \right\} \enspace.
\end{equation*}
It is possible to show that irreducible sets of matrices are dense among compact sets of matrices with respect to the topology induced by the Haussdorff distance. Furthermore, Wirth showed in \cite{Wirth02} that the joint spectral radius is locally Lipschitz continuous around irreducible sets of matrices with respect to the Hausdorff topology (see also \cite{kozyakin2010explicit} for explicit expressions for the Lipschitz constants). This can be seen as an extension of Barabanov's continuity result providing extra information about the behaviour of the function $M \mapsto \rho(M)$.

Again, the concept of irreducibility can be readily extended to WFA. We say that the weighted automaton $A = \wa$ is irreducible if $M = \{\tau_\sigma\}_{\sigma \in \Sigma}$ is irreducible. This concept will play a role in Section~\ref{sec:learning}. The following result provides a characterization of irreducibility for weighted automata in terms of minimality. In particular, the result shows that irreducibility is a stronger condition than minimality. A proof is provided in Appendix~\ref{sec:appjsr}.

\begin{theorem}\label{thm:irredwfa}
A weighted automaton $A = \wa$ is irreducible if and only if $A_v^w$ is minimal for all $v \in V$ and $w \in V^*$ with $v \neq 0$ and $w \neq 0$.
\end{theorem}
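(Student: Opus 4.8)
The plan is to reduce everything to the characterization of minimality via observability and reachability recalled from Boreale, and then to translate each of those two conditions into a statement about invariant subspaces that matches the definition of irreducibility.

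First I would record two bookkeeping facts. (i) For a WFA with transition maps $\{\tau_\sigma\}$ and final weights $\beta$, the largest linear bisimulation $W_A$ is the largest $\{\tau_\sigma\}$-invariant subspace contained in $\ker(\beta)$; concretely $W_A = \bigcap_{x \in \sstar} \ker(\beta \circ \tau_x)$, which is exactly the set of states realizing the zero function. Hence $A^w$ is observable iff the only $\{\tau_\sigma\}$-invariant subspace inside $\ker(w)$ is $\{0\}$, and this does not depend on the initial state. (ii) Applying (i) to the reverse $\bar{A} = \war$, the automaton $A_v$ is reachable iff $\bar{A}_v$ is observable iff the only $\{\tau_\sigma^\top\}$-invariant subspace of $V^*$ contained in $\ker(v)$ (viewing $v$ as the evaluation functional $\phi \mapsto \phi(v)$ on $V^*$) is $\{0\}$. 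I would also use the standard duality: $W \mapsto W^\perp$ is an inclusion-reversing bijection between $\{\tau_\sigma\}$-invariant subspaces of $V$ and $\{\tau_\sigma^\top\}$-invariant subspaces of $V^*$, so $\{\tau_\sigma\}$ is irreducible iff $\{\tau_\sigma^\top\}$ is irreducible.

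For the forward direction, assume $A$ is irreducible and fix $v \neq 0$, $w \neq 0$. By Boreale's characterization it suffices to show $A_v^w$ is observable and reachable. For observability, any $\{\tau_\sigma\}$-invariant subspace inside $\ker(w)$ is either $\{0\}$ or $V$ by irreducibility, and it cannot be $V$ since $w \neq 0$; so by (i), $A_v^w = A^w$ is observable. For reachability, by (ii) and the transpose-irreducibility remark, any $\{\tau_\sigma^\top\}$-invariant subspace inside $\ker(v) \subseteq V^*$ is $\{0\}$ or $V^*$, and it cannot be $V^*$ since $v \neq 0$; hence $A_v^w$ is reachable. Thus $A_v^w$ is minimal.

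For the converse I would argue by contraposition: if $A$ is reducible, pick a $\{\tau_\sigma\}$-invariant subspace $W$ with $\{0\} \subsetneq W \subsetneq V$, choose any nonzero $w \in V^*$ vanishing on $W$ (possible since $W \neq V$) and any $v \neq 0$. Then $W$ is a nontrivial linear bisimulation for $A^w$, so $A_v^w$ is not observable, hence not minimal. I expect no serious obstacle here; the only points needing care are getting the duality between invariant subspaces of $\{\tau_\sigma\}$ and $\{\tau_\sigma^\top\}$ right, and keeping straight that observability is insensitive to the initial vector while reachability is insensitive to the final covector.
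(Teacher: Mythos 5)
Your proof is correct, and it rests on the same foundations as the paper's: Boreale's characterization of minimality as observability plus reachability, and the identification of the largest bisimulation of $A^w$ with the largest $\{\tau_\sigma\}$-invariant subspace contained in $\ker(w)$ (equivalently $\bigcap_{x \in \sstar} \ker(w \circ \tau_x)$). Your converse direction (reducible implies some $A_v^w$ non-minimal) is essentially identical to the paper's, which also takes $w$ annihilating a nontrivial invariant subspace $W$; the paper takes $v \in W$ so that $A_v^w$ computes the zero function, but as you note observability is insensitive to the initial vector, so any $v \neq 0$ works. Where you genuinely differ is the other direction: you prove ``irreducible $\Rightarrow$ every $A_v^w$ minimal'' directly, handling observability by the invariant-subspace characterization and reachability through the annihilator correspondence $W \mapsto W^\perp$ between $\{\tau_\sigma\}$-invariant subspaces of $V$ and $\{\tau_\sigma^\top\}$-invariant subspaces of $V^*$. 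The paper instead proves the contrapositive: from a non-minimal $A_v^w$ it passes to the reverse automaton to assume non-observability without loss of generality, replaces $v$ by a nonzero state realizing the zero function, and exhibits the explicit invariant subspace $\spn\{\tau_x(v) : x \in \sstar\}$, nontrivial because it contains $v$ and lies in $\ker(w)$. Your route makes the duality explicit and avoids the witness-replacement step; the paper's route is more constructive in that it produces the invariant subspace witnessing reducibility. Both are complete; just be sure to state the two facts you take for granted (the formula for the largest bisimulation, and that annihilation gives an inclusion-reversing bijection on invariant subspaces in finite dimension), since the latter is also what underlies the paper's unproved remark that $A$ is irreducible iff $\bar{A}$ is.
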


\section{Bisimulation Seminorms and Pseudometrics for WFA}\label{sec:metrics}

In the same way that the largest bisimulation relation in many settings can be obtained as a fixed point of a certain operator on equivalence relations, a possible way to define bisimulation (pseudo)metrics is via a similar fixed-point construction. See \cite{Ferns04} for an example in the case of Markov decision processes. In this section, the fixed-point construction is used to obtain a bisimulation seminorm on states of a given WFA. Given two WFA we can build their difference automaton $A$ and compute the corresponding seminorm of the initial state of $A$. This construction yields a bisimulation pseudometric between weighted automata.

Let $A = \wa$ be a weighted automaton over the vector space $V$. Let $\cS$ denote the set of all seminorms on $V$. Given $\gamma > 0$ we define the map $F_{A,\gamma} : \cS \to \cS$ between seminorms given by
\begin{equation}
F_{A,\gamma}(s)(v) = |\beta(v)| + \gamma \max_{\sigma \in \Sigma} s(\tau_\sigma(v)) \enspace.
\end{equation}
Note that this definition is independent of the initial state $\alpha$, as is the linear bisimulation for $A$ described in Section \ref{subsec:linbisim}. In the sequel we shall write $F$ instead of $F_{A,\gamma}$ whenever $A$ and $\gamma$ are clear from the context.

To verify that $F : \cS \to \cS$ is well defined we must check that the image $F(s)$ of any seminorm $s$ is also a seminorm. Absolute homogeneity is immediate by the linearity of $\beta$ and $\tau_\sigma$ and the absolute homogeneity of $s$. For the subadditivity we have
\begin{align*}
F(s)(u + v) &= |\beta(u + v)| + \gamma \max_{\sigma \in \Sigma} s(\tau_\sigma(u + v)) \\
&= |\beta(u) + \beta(v)| + \gamma \max_{\sigma \in \Sigma} s(\tau_\sigma(u) + \tau_\sigma(v)) \\
&\leq |\beta(u)| + |\beta(v)| + \gamma \max_{\sigma \in \Sigma} \left(s(\tau_\sigma(u)) + s(\tau_\sigma(v))\right) \\
&\leq F(s)(u) + F(s)(v) \enspace,
\end{align*}
where the last inequality uses subadditivity of the maximum.

To construct bisimulation seminorms for the states of a weighted automaton $A$ we shall study the fixed points of $F_{A,\gamma}$. We start by showing that $F_{A,\gamma}$ has a unique fixed point whenever $\gamma$ is small enough.

\begin{theorem}\label{thm:uniquefp}
Let $A = \wa$. If $\gamma < 1/\rho(A)$, then $F_{A,\gamma}$ has a unique fixed point.
\end{theorem}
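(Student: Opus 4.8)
The plan is to invoke the Banach fixed-point theorem on the space of seminorms $\cS$ equipped with a suitable complete metric, after showing that $F = F_{A,\gamma}$ is a contraction when $\gamma < 1/\rho(A)$.

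First I would set up the metric space. The natural candidate is the sup-metric $d_\infty(s,s') = \sup_{\norm{v} \le 1} |s(v) - s'(v)|$ for some fixed reference norm $\norm{\cdot}$ on $V$; since all norms on the finite-dimensional $V$ are equivalent and $\rho(A)$ is norm-independent, the choice of reference norm is immaterial. One subtlety is that this "distance" can a priori be $+\infty$ (seminorms are unbounded functions on $V$, though bounded on the unit ball since $s(v) \le s$-applied-to-a-basis type bounds hold); restricting to $v$ in the unit ball makes $d_\infty$ finite, and one should check that $(\cS, d_\infty)$ is complete — a uniform limit of seminorms on the unit ball, extended by homogeneity, is again a seminorm, and Cauchy sequences converge.

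Next I would estimate $d_\infty(F(s), F(s'))$. Using $|\,|a|-|b|\,| \le |a-b|$ and the elementary inequality $|\max_\sigma a_\sigma - \max_\sigma b_\sigma| \le \max_\sigma |a_\sigma - b_\sigma|$, the $|\beta(v)|$ terms cancel and we get, for any $v$,
\[
|F(s)(v) - F(s')(v)| \le \gamma \max_{\sigma \in \Sigma} |s(\tau_\sigma(v)) - s'(\tau_\sigma(v))| \enspace.
\]
Iterating $k$ times unrolls this to $\gamma^k \max_{x \in \Sigma^k} |s(\tau_x(v)) - s'(\tau_x(v))| \le \gamma^k \max_{x \in \Sigma^k} \norm{\tau_x} \, d_\infty(s,s')$ for $\norm{v} \le 1$. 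Hence $d_\infty(F^k(s), F^k(s')) \le \gamma^k \big(\sup_{x \in \Sigma^k}\norm{\tau_x}\big)\, d_\infty(s,s')$.

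The main obstacle — and the place where $\rho(A)$ genuinely enters — is that $F$ itself need not be a contraction (the one-step factor $\gamma \max_\sigma \norm{\tau_\sigma}$ can exceed $1$), so I would instead show that some iterate $F^k$ is. By definition of the joint spectral radius, $\big(\sup_{x\in\Sigma^k}\norm{\tau_x}\big)^{1/k} \to \rho(A)$, so for any $\epsilon$ with $\gamma(\rho(A)+\epsilon) < 1$ — possible precisely because $\gamma < 1/\rho(A)$ — there is a $k$ with $\sup_{x\in\Sigma^k}\norm{\tau_x} \le (\rho(A)+\epsilon)^k$, giving a contraction factor $\big(\gamma(\rho(A)+\epsilon)\big)^k < 1$ for $F^k$. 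One then applies the standard corollary of the Banach fixed-point theorem: if $F^k$ is a contraction on a complete metric space then $F$ has a unique fixed point (uniqueness because a fixed point of $F$ is a fixed point of $F^k$; existence because $F$ maps the unique fixed point of $F^k$ to another such, hence to itself). The edge case $\rho(A) = 0$ is handled the same way, taking any $\epsilon < 1/\gamma$.
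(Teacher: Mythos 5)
Your proof is correct, but it takes a genuinely different route from the paper's. The paper makes $F_{A,\gamma}$ itself a one-step contraction by first invoking the Rota--Strang quasi-extremal norm theorem (Theorem~\ref{thm:quasiextremal}): choosing $\delta$ with $\gamma \leq 1/(\rho(A)+\delta)$ and $\eta = \delta/2$, it equips $V$ with a norm in which $\norm{\tau_\sigma(v)} \leq (\rho(A)+\eta)\norm{v}$, so that the contraction factor $\gamma(\rho(A)+\eta) < 1$ appears after a single application of $F$, and Banach's theorem applies directly. You instead work with an arbitrary reference norm, unroll the recursion $k$ times to get $d_\infty(F^k(s),F^k(s')) \leq \gamma^k \bigl(\sup_{x \in \Sigma^k} \norm{\tau_x}\bigr) d_\infty(s,s')$, and use the Daubechies--Lagarias (Gelfand-type) limit formula for the joint spectral radius to make some iterate $F^k$ a contraction, concluding via the standard ``contractive iterate'' corollary of the Banach fixed-point theorem. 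Your version is more elementary in that it avoids the non-constructive Rota--Strang result and needs only the limit formula already quoted in Section~\ref{sec:jsr}; the paper's version buys a cleaner one-step contraction and, more importantly, produces the quasi-extremal norm that is reused later (e.g.\ in the proof of Theorem~\ref{thm:closedform} and in Lemma~\ref{lem:boundd}), which your argument does not supply. Both proofs share the same mild gap of only sketching the completeness of $(\cS,d_\infty)$ and the finiteness of the metric on the unit ball; your remarks on these points are adequate and on par with the paper's.
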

\begin{proof}
For simplicity, let $F = F_{A,\gamma}$. By the assumption on $\gamma$ there exists some $\delta > 0$ such that $\gamma \leq 1/(\rho(A) + \delta)$. Now take $M = \{\tau_\sigma\}_{\sigma \in \Sigma}$ and $\eta = \delta / 2$ and let $\norm{\cdot}$ be the corresponding quasi-extremal norm on $V$ obtained from Theorem~\ref{thm:quasiextremal}. Using this norm we can endow $\cS$ with the metric given by $d(s,s') = \sup_{\norm{v} \leq 1} |s(v) - s'(v)|$ to obtain a complete metric space $(\cS,d)$. To see this, note that for a fixed $v$ with $\norm{v}\leq1$ the sequence $(s_n(v))$ is Cauchy, hence convergent. Call this limit $s(v)$; it is straightforward to see that this defines a seminorm. Thus, if we show that $F$ is a contraction on $\cS$ with respect to this metric, then by Banach's fixed point theorem $F$ has a unique fixed point. To see that $F$ is indeed a contraction we start by observing that:
\begin{equation}
\label{eqn:contraction1}
d(F(s),F(s')) = \sup_{\norm{v} \leq 1} |F(s)(v) - F(s')(v)| = \gamma \sup_{\norm{v} \leq 1} \left|\max_{\sigma} s(\tau_\sigma(v)) - \max_{\sigma'} s'(\tau_{\sigma'}(v))\right| \enspace.
\end{equation}
Fix any $v \in V$ with $\norm{v} \leq 1$ and suppose without loss of generality (otherwise we exchange $s$ and $s'$) that $\max_{\sigma} s(\tau_\sigma(v)) \geq \max_{\sigma'} s'(\tau_{\sigma'}(v))$. Then, letting $\sigma_*=\arg\max_\sigma s(\tau_\sigma(v))$ and using the absolute homogeneity of $s$ and $s'$, it can be shown that:
\begin{equation}\label{eqn:contraction2}
\left|\max_{\sigma} s(\tau_\sigma(v)) - \max_{\sigma'} s'(\tau_{\sigma'}(v))\right| \leq \norm{\tau_{\sigma_*}(v)} d(s,s') \enspace.
\end{equation}
We refer the reader to Appendix~\ref{sec:appmetrics} for a full derivation. Finally, we use the definition of $\norm{\cdot}$ and the choices of $\delta$ and $\eta$ to see that
\begin{align*}
\gamma \norm{\tau_{\sigma_*}(v)} \leq \gamma (\rho(A) + \eta) \norm{v}
\leq \frac{\rho(A) + \delta/2}{\rho(A) + \delta} < 1 \enspace,
\end{align*}
from which we conclude by combining \eqref{eqn:contraction1} with \eqref{eqn:contraction2} that $d(F(s),F(s')) < d(s,s')$.
\end{proof}

We now exhibit the fixed point of $F_{A,\gamma}$ in closed form. This provides a useful formula for studying properties of the resulting seminorm.

\begin{theorem}\label{thm:closedform}
Let $A = \wa$. Suppose $\gamma < 1/\rho(A)$ and let $s_{A,\gamma} \in \cS$ be the fixed point of $F_{A,\gamma}$. Then for any $v \in V$ we have
\begin{equation}\label{eqn:closeform}
s_{A,\gamma}(v) = \sup_{x \in \sinfty} \sum_{t=0}^\infty \gamma^t |\beta(\tau_{x_{\leq t}}(v))|
= \sup_{x \in \sinfty} \sum_{t=0}^\infty \gamma^t |f_{A_v}(x_{\leq t})|
\enspace.
\end{equation}
\end{theorem}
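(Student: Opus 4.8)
The plan is to verify that the right-hand side of \eqref{eqn:closeform} is a well-defined seminorm and that it is a fixed point of $F_{A,\gamma}$; uniqueness then follows from Theorem~\ref{thm:uniquefp}. Write $g(v) = \sup_{x \in \sinfty} \sum_{t=0}^\infty \gamma^t |\beta(\tau_{x_{\leq t}}(v))|$. First I would establish that the series converges for every $v$ and the supremum over $x$ is finite: using a quasi-extremal norm $\norm{\cdot}$ as in Theorem~\ref{thm:quasiextremal}, with $\eta$ chosen so that $\gamma(\rho(A)+\eta) < 1$, one gets $|\beta(\tau_{x_{\leq t}}(v))| \le \norm{\beta} \cdot (\rho(A)+\eta)^t \norm{v}$, so each series is dominated by a convergent geometric series uniformly in $x$, giving $g(v) \le \norm{\beta}\norm{v}/(1 - \gamma(\rho(A)+\eta)) < \infty$. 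Absolute homogeneity of $g$ is immediate from linearity of $\beta$ and $\tau_{x_{\leq t}}$; subadditivity follows because for each fixed $x$ the map $v \mapsto \sum_t \gamma^t |\beta(\tau_{x_{\leq t}}(v))|$ is subadditive, and the pointwise supremum of subadditive functions is subadditive. Hence $g \in \cS$.

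The core of the argument is to check $F_{A,\gamma}(g) = g$. Expanding the definition,
\begin{align*}
F_{A,\gamma}(g)(v) &= |\beta(v)| + \gamma \max_{\sigma \in \Sigma} \sup_{x \in \sinfty} \sum_{t=0}^\infty \gamma^t |\beta(\tau_{x_{\leq t}}(\tau_\sigma(v)))| \\
&= |\beta(v)| + \gamma \max_{\sigma \in \Sigma} \sup_{x \in \sinfty} \sum_{t=0}^\infty \gamma^t |\beta(\tau_{\sigma x_{\leq t}}(v))| \enspace.
\end{align*}
Here $\sigma x$ denotes prepending $\sigma$ to $x$, and I use $\tau_{x_{\leq t}} \circ \tau_\sigma = \tau_{\sigma x_{\leq t}}$ (note $(\sigma x)_{\leq t+1} = \sigma\,(x_{\leq t})$). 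Pulling the $\gamma$ and reindexing $t \mapsto t+1$, the combined $\max_\sigma \sup_x$ ranges exactly over all infinite strings $y = \sigma x$, and the $t=0$ term $|\beta(v)|$ accounts for the empty prefix $y_{\leq 0} = \epsilon$. So $F_{A,\gamma}(g)(v) = \sup_{y \in \sinfty} \sum_{t=0}^\infty \gamma^t |\beta(\tau_{y_{\leq t}}(v))| = g(v)$, as desired. The second equality in \eqref{eqn:closeform} is just the definition $f_{A_v}(x_{\leq t}) = \beta(\tau_{x_{\leq t}}(v))$.

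The main obstacle is the interchange of $\max_{\sigma}$ and $\sup_{x}$ with the infinite sum when recombining them into a single supremum over $\sinfty$. Two small points need care: that every $y \in \sinfty$ really does decompose as $\sigma x$ for a unique first symbol and tail (immediate since $\Sigma$ is finite and nonempty, and infinite strings have a first symbol), and that the $t=0$ term of $g(\tau_\sigma(v))$-indexed sums lines up correctly after reindexing so that no prefix is double-counted or omitted. Since all terms are nonnegative, Tonelli-type concerns do not arise and the supremum of the reindexed family equals the supremum over $\sinfty$ directly; the finiteness established in the first step guarantees none of these manipulations involve $\infty - \infty$. I would present the reindexing explicitly for a couple of lines and leave the bijection $y \leftrightarrow (\sigma, x)$ as evident.
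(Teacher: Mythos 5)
Your proposal is correct and follows essentially the same route as the paper's proof: establish finiteness of the closed-form expression via a quasi-extremal norm and a geometric series bound, verify the fixed-point equation by the prefix-reindexing identity $\tau_{x_{\leq t}}(\tau_\sigma(v)) = \tau_{(\sigma x)_{\leq t+1}}(v)$, and conclude by the uniqueness guaranteed by Theorem~\ref{thm:uniquefp}. The only (harmless) difference is that you spell out the seminorm axioms and the recombination of $\max_\sigma \sup_x$ into $\sup_{y\in\sinfty}$ in slightly more detail, and you should write $\norm{\beta}_*$ for the dual norm in the finiteness bound.
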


The proof can be found in Appendix~\ref{sec:appmetrics}. The next theorem is the main result of this section. It shows that any seminorm arising as a fixed point of $F_{A,\gamma}$ captures the notion of $A$-bisimulation through its kernel for any $\gamma$. Namely, two states $u, v \in V$ are $A$-bisimilar if and only if $s_{A,\gamma}(u-v)=0$. Note that this result is independent of the choice of $\gamma$, as long as the fixed point of $F_{A,\gamma}$ is guaranteed to exist.

\begin{definition}
Let $A = \wa$ be a weighted automaton with $A$-bisimulation $\sim_A$. We say that a seminorm $s$ over $V$ is a \emph{bisimulation seminorm} for $A$ if $\ker(s) = \ker(\sim_A)$.
\end{definition}

\begin{theorem}\label{thm:kers}
Let $A = \wa$. For any $0 < \gamma < 1/\rho(A)$ the fixed point $s_{A,\gamma} \in \cS$ of $F_{A,\gamma}$ is a bisimulation seminorm for $A$.
\end{theorem}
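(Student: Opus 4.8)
The cleanest route leverages the closed-form expression of Theorem~\ref{thm:closedform} together with the functional characterization of $A$-bisimilarity recalled in Section~\ref{subsec:linbisim}, so that is the plan I would follow.

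First I would identify $\ker(\sim_A)$ explicitly. Since $u \sim_A v$ if and only if $f_{A_u} = f_{A_v}$, and since $f_{A_0} \equiv 0$ (because $\tau_x$ is linear, so $\tau_x(0) = 0$, and $\beta(0) = 0$), we get $\ker(\sim_A) = \{v \in V : v \sim_A 0\} = \{v \in V : f_{A_v} \equiv 0\}$; recall moreover that this set is exactly $W_A$, the largest linear bisimulation for $A$.

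Next I would compute $\ker(s_{A,\gamma})$ directly from the closed form $s_{A,\gamma}(v) = \sup_{x \in \sinfty} \sum_{t=0}^\infty \gamma^t |f_{A_v}(x_{\leq t})|$, which is available under the hypothesis $\gamma < 1/\rho(A)$. Since $\gamma > 0$ and every summand is nonnegative, $s_{A,\gamma}(v) = 0$ holds precisely when $f_{A_v}(x_{\leq t}) = 0$ for every $x \in \sinfty$ and every $t \geq 0$. As each finite word $y \in \sstar$ occurs as a prefix $x_{\leq |y|}$ of some infinite word $x$ (extend $y$ by any fixed letter, the alphabet being nonempty), this is equivalent to $f_{A_v} \equiv 0$. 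Hence $\ker(s_{A,\gamma}) = \{v : f_{A_v} \equiv 0\} = \ker(\sim_A)$, which is exactly the defining property of a bisimulation seminorm.

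With the closed form in hand there is no real obstacle; the only points requiring a little care are the elementary reduction ``a sum of nonnegative terms is zero iff each term is zero'' combined with the prefix argument, plus the degenerate empty-alphabet case (which can be excluded). If instead one wanted a proof not relying on Theorem~\ref{thm:closedform} --- matching the remark that the conclusion holds whenever the fixed point exists --- I would argue from the fixed-point equation $s_{A,\gamma}(v) = |\beta(v)| + \gamma \max_\sigma s_{A,\gamma}(\tau_\sigma(v))$: nonnegativity of both summands forces $\ker(s_{A,\gamma}) \subseteq \ker(\beta)$ and $\tau_\sigma(\ker(s_{A,\gamma})) \subseteq \ker(s_{A,\gamma})$, so $\ker(s_{A,\gamma})$ is itself a linear bisimulation and hence contained in $W_A$; conversely, iterating the equation gives $s_{A,\gamma}(v) = \gamma^t \max_{x \in \Sigma^t} s_{A,\gamma}(\tau_x(v)) \leq C\, \gamma^t \sup_{x \in \Sigma^t} \norm{\tau_x}$ for $v \in W_A$ with $\norm{v} \leq 1$, where $C = \sup_{\norm{u} \leq 1} s_{A,\gamma}(u) < \infty$ by finite-dimensionality, and since $(\gamma^t \sup_{x \in \Sigma^t}\norm{\tau_x})^{1/t} \to \gamma \rho(A) < 1$ the right-hand side tends to $0$, forcing $s_{A,\gamma}(v) = 0$. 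In this alternative the spectral-radius estimate for the second inclusion is the only delicate step.
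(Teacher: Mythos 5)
Your main argument is correct, and it takes a genuinely different route from the paper's. The paper proves two inclusions: it first shows, via the closed form, that $\ker(s_{A,\gamma})$ is contained in $\ker(\beta)$ and is invariant under every $\tau_\sigma$ (the invariance step requires a short chain of inequalities), so that $\ker(s_{A,\gamma})$ is itself a linear bisimulation and hence $\ker(s_{A,\gamma}) \subseteq W_A$ by maximality of $W_A$; it then shows $W_A \subseteq \ker(s_{A,\gamma})$ by noting $\beta(\tau_x(v)) = 0$ for all $x \in \sstar$ when $v \in W_A$. You instead bypass both the invariance computation and the appeal to maximality: using Boreale's characterization $u \sim_A v \iff f_{A_u} = f_{A_v}$ (recalled in Section~\ref{subsec:linbisim}) you identify $\ker(\sim_A) = W_A$ with $\{v : f_{A_v} \equiv 0\}$, and then read off from the closed form of Theorem~\ref{thm:closedform} that $s_{A,\gamma}(v) = 0$ exactly when all prefixes give value zero, i.e.\ $f_{A_v} \equiv 0$ (your prefix-extension remark and the hypothesis $\gamma > 0$ are exactly the points that need stating, together with nonemptiness of $\Sigma$ so that $\sinfty \neq \emptyset$). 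This is shorter and arguably more transparent, at the cost of leaning on the functional characterization of $\sim_A$ rather than working only with Definition~\ref{def:bisimrel}. Your alternative argument from the fixed-point equation is also sound and is closer in structure to the paper's proof: the forward inclusion (kernel of $\beta$ plus invariance) falls out of nonnegativity of the two terms even more cleanly than in the paper, while the reverse inclusion $W_A \subseteq \ker(s_{A,\gamma})$ is obtained by iterating the equation and using $\gamma^t \sup_{x \in \Sigma^t} \norm{\tau_x} \to 0$ (valid since $\gamma < 1/\rho(A)$ and $s_{A,\gamma}$ is bounded on the unit ball in finite dimension), which substitutes a spectral estimate for the closed form; this variant has the merit of applying to any fixed point of $F_{A,\gamma}$ without invoking Theorem~\ref{thm:closedform}.
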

\begin{proof}
For simplicity, let $F = F_{A,\gamma}$ and $s = s_{A,\gamma}$. 
Since $W_A = \ker(\sim_A)$ is the largest bisimulation for $A$, it suffices to show that $\ker(s)$ is a bisimulation for $A$ with $W_A \subseteq \ker(s)$. For the first property we recall that $\ker(s)$ is a linear subspace of $V$ and note that for any $v\in \ker(s)$ we have, using Theorem~\ref{thm:closedform},
\begin{align*}
0 = s(v) = |\beta(v)| + \sup_{x \in \sinfty} \sum_{t=1}^\infty \gamma^t |\beta(\tau_{x_{\leq t}}(v))| \geq |\beta(v)| \geq 0 \enspace.
\end{align*}
Therefore $\ker(s) \subseteq \ker(\beta)$. Using the fact that $\beta(v)=0$, we can also verify the invariance of $\ker(s)$ under all $\tau_\sigma$, namely $s(\tau_\sigma(v))=0$ for all $v \in \ker(s)$ and $\sigma \in \Sigma$ (the full derivation is shown in Appendix~\ref{sec:appmetrics}). Therefore $\ker(s)$ is a bisimulation for $A$.

Now let $v \in W_A$. Since $W_A$ is contained in the kernel of $\beta$ and is invariant for all $\tau_\sigma$, we see that $\beta(\tau_x(v)) = 0$ for all $x \in \sstar$. Therefore, using the expression for $s$ given in Theorem~\ref{thm:closedform} we obtain $s(v) = 0$. This concludes the proof.
\end{proof}

Because every fixed point of $F_{A,\gamma}$ is a seminorm whose kernel agrees with that of Boreale's bisimulation relation $\sim_A$, we shall call them \emph{$\gamma$-bisimulation seminorms} for $A$. Interestingly, we can now show that when $A$ is observable then every $\gamma$-bisimulation seminorm is in fact a norm.

\begin{corollary}\label{cor:norm}
Let $A = \wa$ and $ \gamma < 1/\rho(A)$. If $A$ is observable then the $\gamma$-bisimulation seminorm $s_{A,\gamma}$ is a norm.
\end{corollary}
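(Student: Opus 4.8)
The plan is to reduce the claim to the already-established fact that $s_{A,\gamma}$ is a bisimulation seminorm together with the characterization of observability from Section~\ref{subsec:linbisim}. The only thing separating a seminorm from a norm is the size of its kernel: a seminorm $s$ on a vector space $V$ is a norm precisely when $\ker(s) = \{v \in V : s(v) = 0\}$ is the trivial subspace $\{0\}$. So the entire task is to show $\ker(s_{A,\gamma}) = \{0\}$.

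First I would invoke Theorem~\ref{thm:uniquefp} to note that, under the hypothesis $0 < \gamma < 1/\rho(A)$, the fixed point $s_{A,\gamma}$ of $F_{A,\gamma}$ exists and is unique, and then Theorem~\ref{thm:kers} to conclude that it is a bisimulation seminorm for $A$, i.e.\ $\ker(s_{A,\gamma}) = \ker(\sim_A) = W_A$. Next I would recall that $A$ observable means exactly that the largest linear bisimulation is trivial, $W_A = \{0\}$ (equivalently, every state $v$ realizes a distinct function $f_{A_v}$). Combining these two facts gives $\ker(s_{A,\gamma}) = \{0\}$, hence $s_{A,\gamma}$ is a norm.

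There is no genuine obstacle here: the corollary is an immediate consequence of Theorem~\ref{thm:kers} and the definition of observability. If one preferred a self-contained argument bypassing the kernel characterization, one could instead work directly from the closed form of Theorem~\ref{thm:closedform}: if $s_{A,\gamma}(v) = 0$ then $\sup_{x \in \sinfty} \sum_{t=0}^\infty \gamma^t |f_{A_v}(x_{\leq t})| = 0$, and since every term is nonnegative this forces $f_{A_v}(y) = 0$ for every $y \in \sstar$ (pick any $x \in \sinfty$ extending $y$ and isolate the index $t = |y|$); thus $f_{A_v} \equiv 0 = f_{A_0}$, so $v \sim_A 0$, and observability yields $v = 0$. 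Either route works, but I would present the short one via Theorem~\ref{thm:kers}.
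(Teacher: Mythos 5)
Your main argument is exactly the paper's proof: apply Theorem~\ref{thm:kers} to get $\ker(s_{A,\gamma}) = \ker(\sim_A)$, use observability to conclude this kernel is $\{0\}$, and hence the seminorm is a norm. The alternative route via the closed form of Theorem~\ref{thm:closedform} is also correct, but it is not needed and the paper takes the short route you chose as your primary one.
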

\begin{proof}
By Theorem~\ref{thm:kers} and the observability of $A$ we have $\ker(s_{A,\gamma}) = \ker(\sim_A) = \{0\}$. Thus, $s_{A,\gamma}$ is a norm.
\end{proof}

Given an automaton $A$, and state vectors $v, w \in V$, the pseudometric between states of $A$ induced by $s_{A,\gamma}$ is $d_{A,\gamma}(v,w)=s_{A,\gamma}(v-w)$. Pseudometrics of this form will be called \emph{$\gamma$-bisimulation pseudometrics}. By Corollary~\ref{cor:norm}, if $A$ is observable then $d_{A,\gamma}$ is in fact a metric.

To conclude this section we show how to use our $\gamma$-bisimulation pseudometrics to define a pseudometric between weighted automata. In order to capture the idea of distance between two WFA let us build the automaton computing the difference between their functions. Given weighted automata $A_i = \wai$ for $i = 1,2$, we define their \emph{difference automaton} as $A = A_1 - A_2 = \wa$ where $V = V_1\oplus V_2$, $\alpha = \alpha_1\oplus(-\alpha_2)$, $\beta = \beta_1\oplus\beta_2$, and $\tau_\sigma = \tau_{1,\sigma}\oplus\tau_{2,\sigma}$ for all $\sigma \in \Sigma$. Note that $A$ satisfies $f_{A}(x)=f_{A_1}(x)-f_{A_2}(x)$ for all $x \in \sstar$ and that $\rho(A) = \max\{\rho(A_1),\rho(A_2)\}$. Then, letting $s_{A,\gamma}$ be the bisimulation seminorm for $A$ we are ready to define our bisimulation distance between weighted automata.

\begin{definition}
\label{def:pseudometricWA}
Let $A_1$ and $A_2$ be two weighted automata and let $A$ be their difference automaton. For any $\gamma < 1/\rho(A)$ we define the \emph{$\gamma$-bisimulation distance} between $A_1$ and $A_2$ as $d_\gamma(A_1,A_2)=s_{A,\gamma}(\alpha)$.
\end{definition}

By exploiting the closed form expression for $s_{A,\gamma}$ given in Theorem~\ref{thm:closedform} we can provide a closed form expression for $d_{\gamma}$.

\begin{corollary}\label{cor:closedformd}
Let $A_1$ and $A_2$ two weighted automata and $\gamma < 1/\max\{\rho(A_1),\rho(A_2)\}$. Then the $\gamma$-bisimulation distance between $A_1$ and $A_2$ is given by
\begin{equation}
d_\gamma(A_1,A_2) = \sup_{x \in \sinfty} \sum_{t = 0}^{\infty} \gamma^t \left| f_{A_1}(x_{\leq t}) - f_{A_2}(x_{\leq t}) \right| \enspace.
\end{equation}
\end{corollary}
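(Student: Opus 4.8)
The plan is to combine Definition~\ref{def:pseudometricWA} with the closed-form expression for the bisimulation seminorm supplied by Theorem~\ref{thm:closedform}, applied to the difference automaton $A = A_1 - A_2$. First I would recall that, as noted in the text immediately preceding Definition~\ref{def:pseudometricWA}, the difference automaton satisfies $\rho(A) = \max\{\rho(A_1),\rho(A_2)\}$, so the hypothesis $\gamma < 1/\max\{\rho(A_1),\rho(A_2)\}$ is exactly the hypothesis $\gamma < 1/\rho(A)$ needed for the fixed point $s_{A,\gamma}$ to exist and for Theorem~\ref{thm:closedform} to apply. Then, by definition, $d_\gamma(A_1,A_2) = s_{A,\gamma}(\alpha)$ where $\alpha = \alpha_1 \oplus (-\alpha_2)$ is the initial state of $A$.

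Next I would invoke Theorem~\ref{thm:closedform} with $v = \alpha$, which gives
\begin{equation*}
d_\gamma(A_1,A_2) = s_{A,\gamma}(\alpha) = \sup_{x \in \sinfty} \sum_{t=0}^\infty \gamma^t |f_{A_\alpha}(x_{\leq t})| \enspace.
\end{equation*}
Since $\alpha$ is the initial state of $A$, we have $A_\alpha = A$, so $f_{A_\alpha} = f_A$. The only remaining step is to identify $f_A$ with the pointwise difference $f_{A_1} - f_{A_2}$. This is precisely the property recorded in the text just after the definition of the difference automaton, namely $f_A(x) = f_{A_1}(x) - f_{A_2}(x)$ for all $x \in \sstar$; if one wants to see it directly, it follows from $\tau_x = \tau_{1,x} \oplus \tau_{2,x}$, $\beta = \beta_1 \oplus \beta_2$, and $\alpha = \alpha_1 \oplus (-\alpha_2)$, so that $\beta(\tau_x(\alpha)) = \beta_1(\tau_{1,x}(\alpha_1)) + \beta_2(\tau_{2,x}(-\alpha_2)) = f_{A_1}(x) - f_{A_2}(x)$. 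Substituting this into the formula above yields
\begin{equation*}
d_\gamma(A_1,A_2) = \sup_{x \in \sinfty} \sum_{t=0}^\infty \gamma^t |f_{A_1}(x_{\leq t}) - f_{A_2}(x_{\leq t})| \enspace,
\end{equation*}
which is the claimed expression.

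This corollary is essentially a direct bookkeeping consequence of the theorems already established, so I do not anticipate a genuine obstacle; the only point requiring a little care is making sure the hypothesis on $\gamma$ translates correctly through the identity $\rho(A) = \max\{\rho(A_1),\rho(A_2)\}$ (which in turn rests on the behaviour of the joint spectral radius under direct sums recalled in Section~\ref{sec:jsr}), and that the convergence of the series in Theorem~\ref{thm:closedform} — already guaranteed there under $\gamma \rho(A) < 1$ — is inherited here without further argument.
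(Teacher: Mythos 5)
Your proposal is correct and takes essentially the same route as the paper, which presents Corollary~\ref{cor:closedformd} as an immediate consequence of Definition~\ref{def:pseudometricWA} and Theorem~\ref{thm:closedform} applied to the difference automaton, using $f_A = f_{A_1} - f_{A_2}$ and $\rho(A) = \max\{\rho(A_1),\rho(A_2)\}$ exactly as you do. No gaps; the bookkeeping on the hypothesis $\gamma < 1/\rho(A)$ is handled correctly.
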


Using the properties of our bisimulation seminorms one can immediately see that $d_\gamma$ is indeed a pseudometric between all pairs of WFA such that $\gamma < 1/\rho(A_1 - A_2)$. It is also easy to see that $d_\gamma$ captures the notion of equivalence between weighted automata, in the sense that $d_\gamma(A_1, A_2) = 0$ if and only if $f_{A_1} = f_{A_2}$. Therefore, since minimal weighted automata are unique up to a change of basis, the only way to have $d_\gamma(A_1, A_2) = 0$ when $A_1$ is minimal is to have either $A_1 = A_2$ or $A_2$ is a non-minimal WFA recognizing the same weighted language as $A_1$. In particular, this implies that $d_{\gamma}$ is a metric on the set of all minimal WFA $A$ with $\gamma < 1/\rho(A)$.

\section{Continuity Properties}\label{sec:continuity}

In this section we study several continuity properties of our bisimulation pseudometrics between weighted automata. The continuity notions we consider are adapted from those presented by Jaeger et al.\ in \cite{Jaeger14}, which are developed for labelled Markov chains. Here we extend their definitions of parameter continuity and property continuity to the case of weighted automata. Such notions can be motivated by applications of metrics between transition systems to problems in machine learning \cite{Desharnais04,Ferns11,Ferns05}; see Section~\ref{sec:learning} for a discussion on how to use our bisimulation pseudometrics in the analysis of learning algorithms.

\subsection{Parameter Continuity}\label{sec:paramcont}

Given a sequence of weighted automata $A_i$ converging to a weighted automaton $A$, parameter continuity captures the notion that, as the weights in $A_i$ converge to the weights in $A$, the behavioural distance between $A_i$ and $A$ tends to zero. To make this formal we first define convergence for a sequence of automata and then parameter continuity.

\begin{definition}
Let $(A_i)_{i \in \N}$ be a sequence of WFA $A_i = \wan$ over the same alphabet $\Sigma$ and normed vector space $(V,\norm{\cdot})$. We say that the sequence $(A_i)$ \emph{converges} to $A = \wa$ if $\lim_{i \to \infty}\norm{\alpha_i -\alpha} = 0$, $\lim_{i \to \infty}\norm{\beta_i -\beta}_* = 0$, and $\lim_{i \to \infty}\norm{\tau_{i,\sigma}-\tau_\sigma} = 0$ for all $\sigma\in\Sigma$.
\end{definition}

\begin{definition} 
A pseudometric $d$ between weighted automata is \emph{parameter continuous} if for any sequence $(A_i)_{i \in \N}$ converging to some weighted automaton $A$ we have $\lim_{i \to \infty} d(A,A_i)=0$.
\end{definition}

The main result of this section is the following theorem stating that our bisimulation pseudometric $d_\gamma$ is parameter continuous.

\begin{theorem}\label{thm:parcont}
The $\gamma$-bisimulation distance between weighted automata is parameter continuous for any sequence of weighted automata $(A_i)_{i \in \N}$ converging to a weighted automaton $A$ with $\gamma < 1/\rho(A)$.
\end{theorem}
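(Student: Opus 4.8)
The plan is to work directly with the closed-form expression from Corollary~\ref{cor:closedformd}, which gives us
\[
d_\gamma(A, A_i) = \sup_{x \in \sinfty} \sum_{t=0}^\infty \gamma^t \left| f_A(x_{\leq t}) - f_{A_i}(x_{\leq t}) \right|,
\]
so the goal is to show this supremum tends to $0$ as $i \to \infty$. The key difficulty is that the series has infinitely many terms and the supremum is over infinitely many strings, so a naive term-by-term argument does not immediately suffice; we need uniform control. The strategy is to split the sum into a finite ``head'' (for $t < T$) and a ``tail'' (for $t \geq T$), control the tail uniformly using the joint spectral radius condition $\gamma < 1/\rho(A)$, and then handle the head using the convergence of the automata parameters.

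First I would fix the quasi-extremal norm: since $\gamma < 1/\rho(A)$, pick $\delta > 0$ with $\gamma \leq 1/(\rho(A) + \delta)$ and let $\norm{\cdot}$ be a norm on $V$ with $\norm{\tau_\sigma(v)} \leq (\rho(A) + \delta/2)\norm{v}$ for all $\sigma$, via Theorem~\ref{thm:quasiextremal} applied to $M = \{\tau_\sigma\}_{\sigma \in \Sigma}$. Set $r = \gamma(\rho(A) + \delta/2) < 1$. Then $|f_A(x_{\leq t})| = |\beta(\tau_{x_{\leq t}}(\alpha))| \leq \norm{\beta}_* \norm{\alpha} (\rho(A) + \delta/2)^t$, so $\gamma^t |f_A(x_{\leq t})| \leq \norm{\beta}_* \norm{\alpha} r^t$. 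The point of the convergence hypothesis is that once $i$ is large, $\norm{\alpha_i}$, $\norm{\beta_i}_*$, and $\norm{\tau_{i,\sigma}}$ are all uniformly bounded, and moreover (this is the step needing a little care) $\rho(A_i) \leq \rho(A) + \delta/2$ for $i$ large enough — this follows because $\rho$ is upper semicontinuous, or more directly because $\norm{\tau_{i,\sigma}(v)} \leq \norm{\tau_\sigma(v)} + \norm{\tau_{i,\sigma} - \tau_\sigma}\norm{v} \leq (\rho(A) + \delta/2 + o(1))\norm{v}$, so for large $i$ the same norm is quasi-extremal for $A_i$ with a slightly worse constant still below $1/\gamma$. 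Hence the tails $\sum_{t \geq T} \gamma^t |f_{A_i}(x_{\leq t})|$ and $\sum_{t \geq T} \gamma^t |f_A(x_{\leq t})|$ are each bounded by $C \sum_{t \geq T} r'^t$ for a uniform constant $C$ and a uniform $r' < 1$, which is $< \varepsilon/2$ for $T$ large, independent of $x$ and of $i$.

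For the head, with $T$ now fixed, I would bound $\sum_{t < T} \gamma^t |f_A(x_{\leq t}) - f_{A_i}(x_{\leq t})| \leq \sum_{t < T} \gamma^t \sup_{y \in \Sigma^t} |f_A(y) - f_{A_i}(y)|$, and show each $\sup_{y \in \Sigma^t} |f_A(y) - f_{A_i}(y)| \to 0$ as $i \to \infty$. This is a finite statement: $f_A(y) = \beta(\tau_y(\alpha))$ is a (multilinear) continuous function of the tuple $(\alpha, \beta, \{\tau_\sigma\})$, and since there are only finitely many strings $y$ of length $< T$, the convergence of the parameters gives uniform convergence over this finite set — a standard telescoping estimate $\norm{\tau_{y} - \tau_{i,y}} \leq \sum_{j} \norm{\tau_\sigma} \cdots \norm{\tau_{i,\sigma'} - \tau_{\sigma'}} \cdots$ expanded over the $|y|$ positions does it. Thus for $i$ large the head is $< \varepsilon/2$, and combining with the tail bound gives $d_\gamma(A, A_i) < \varepsilon$. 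The main obstacle is making the tail bound genuinely uniform in $i$, i.e.\ establishing that $\rho(A_i)$ stays below $\rho(A) + \delta/2$ (equivalently that a single norm can be chosen quasi-extremal for all $A_i$ with $i$ large); once that is in hand the rest is routine.
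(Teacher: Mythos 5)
Your proposal is correct, but it takes a genuinely different route from the paper's. The paper proves parameter continuity quantitatively: it first establishes Lemma~\ref{lem:boundd}, an explicit bound on $d_\gamma(A,A')$ in terms of $\norm{\alpha-\alpha'}$, $\norm{\beta-\beta'}_*$ and $\max_\sigma\norm{\tau_\sigma-\tau_\sigma'}$, valid whenever a single norm makes all transition operators of both automata have norm at most $\theta$ with $\gamma\theta<1$; it then manufactures such a norm uniformly over the tail of the sequence by applying Theorem~\ref{thm:quasiextremal} to the closure of the union set $M_{j_0}=\{\tau_\sigma\}\cup\bigcup_{i\geq j_0}\{\tau_{i,\sigma}\}$, which in turn requires Barabanov's continuity of the joint spectral radius on compact sets to know that $\rho(M_{j_0})$ is close to $\rho(A)$. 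You instead split the closed-form series of Corollary~\ref{cor:closedformd} into a head and a tail: the tail is controlled uniformly by taking the Rota--Strang norm for $A$ alone and observing that $\rho(A_i)\leq\max_\sigma\norm{\tau_{i,\sigma}}\leq\max_\sigma\norm{\tau_\sigma}+\max_\sigma\norm{\tau_{i,\sigma}-\tau_\sigma}$, so the same norm is quasi-extremal for $A_i$ with a constant still below $1/\gamma$ once $i$ is large (this also guarantees $\gamma<1/\rho(A_i)$, so $d_\gamma(A,A_i)$ is defined); the head is a finite sum handled by continuity of $(\alpha,\beta,\{\tau_\sigma\})\mapsto f_A(y)$ via a telescoping estimate. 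Your argument is more elementary---it avoids both the joint-spectral-radius continuity theorem and the recurrence computation behind Lemma~\ref{lem:boundd}---but it is purely qualitative, whereas the paper's detour through Lemma~\ref{lem:boundd} buys an explicit convergence rate that is reused in Theorem~\ref{thm:spectral} for the spectral-learning guarantee. One small point to make explicit: the hypothesis gives convergence of the weights in the originally given norm on $V$, so you should invoke equivalence of norms in finite dimension to transfer it to the quasi-extremal norm (and its dual and induced operator norms) before running your perturbation estimates; the paper does this explicitly at the end of its proof.
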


The proof of this result is quite technical and combines the following two tools:
\begin{enumerate}
\item A technical estimate of $d_{\gamma}(A,A_i)$ in terms of the distance between the weights of $A$ and $A_i$ with respect to a certain norm (Lemma~\ref{lem:boundd}). This result also plays a prominent result in Section~\ref{sec:learning}.
\item Several topological properties of the joint spectral radius discussed in Section~\ref{sec:jsr}.
\end{enumerate}
These proofs are given in Appendix~\ref{sec:appparamcont}.

\subsection{Input Continuity}\label{sec:inputcont}

Inspired by the notion of property continuity presented in \cite{Jaeger14}, input $g$-continuity encapsulates the idea that an upper bound on the behavioural distance between two systems should entail an upper bound on the difference between their outputs on any input $x \in \sstar$.

\begin{definition}\label{def:inpcon}
Let $g : \mathbb{N} \to \mathbb{R}$ be such that $g(l) > 0$ for all $l\in\mathbb{N}$. A distance function $d$ between weighted automata is \emph{input $g$-continuous} when the following holds: if $(A_i)_{i \in \N}$ is a sequence of weighted automata such that $\lim_{i \to \infty} d(A,A_i) = 0$ for some weighted automaton $A$, then one has
\begin{equation}\label{eqn:inpcon}
\lim_{i \to \infty} \sup_{x \in \sstar} \frac{| f_A(x) - f_{A_i}(x) |}{g(|x|)} = 0 \enspace.
\end{equation}
\end{definition}

Note the special case $g(l) = 1$ is tightly related to the notion of property continuity presented in \cite{Jaeger14}. The authors of that paper consider differences between the probabilities of the same event under different labelled Markov chains, and therefore always have numbers between $0$ and $1$. However, for general weighted automata the quantity $|f_A(x) - f_{A'}(x)|$ can grow unboundedly with $|x|$. Thus, in some cases we will need to have a $g(|x|)$ growing with $|x|$ in order to guarantee that \eqref{eqn:inpcon} stays bounded. The next two results show that essentially $g(|x|) = \gamma^{-|x|}$ is the threshold between input continuity and input non-continuity in our $\gamma$-bisimulation pseudometrics.

\begin{theorem}
\label{thm:gcontinuous}
The pseudometric $d_{\gamma}$ from Definition \ref{def:pseudometricWA} is
input $g$-continuous for any $g(l)=\Omega(\gamma^{-l})$. 
\end{theorem}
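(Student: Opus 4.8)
The plan is to reduce everything to the closed form for $d_\gamma$ provided by Corollary~\ref{cor:closedformd}. The first step is to extract from it a \emph{pointwise} estimate: for any two weighted automata $B, B'$ with $\gamma < 1/\rho(B-B')$ and any finite word $x \in \sstar$, one has $|f_B(x) - f_{B'}(x)| \le \gamma^{-|x|}\, d_\gamma(B,B')$. To see this, fix any infinite word $y \in \sinfty$ that extends $x$ (so $y_{\leq |x|} = x$); such a $y$ exists because $\Sigma$ is a nonempty finite alphabet. Since all summands of the series defining $d_\gamma(B,B')$ are non-negative, restricting the supremum over $\sinfty$ to this single $y$ and discarding every term except $t = |x|$ gives $d_\gamma(B,B') \ge \gamma^{|x|}\,|f_B(x) - f_{B'}(x)|$, which rearranges to the claimed inequality.

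The second step processes the hypothesis $g(l) = \Omega(\gamma^{-l})$. By definition this yields constants $c_0 > 0$ and $l_0 \in \N$ with $g(l) \ge c_0 \gamma^{-l}$ for all $l \ge l_0$. Because $g(l) > 0$ for \emph{every} $l$ and there are only finitely many indices below $l_0$, the number $\min_{0 \le l < l_0} g(l)\gamma^l$ is a positive real, so setting $c = \min\{c_0,\ \min_{0 \le l < l_0} g(l)\gamma^l\} > 0$ produces a \emph{uniform} lower bound $g(l) \ge c\,\gamma^{-l}$ valid for all $l \in \N$. The point of this step is that asymptotic dominance together with pointwise positivity upgrades to a global bound; it is the only place where a small edge-case argument is needed.

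Combining the two steps, for any $x \in \sstar$ we obtain
\[
\frac{|f_A(x) - f_{A_i}(x)|}{g(|x|)} \;\le\; \frac{\gamma^{-|x|}\, d_\gamma(A,A_i)}{c\,\gamma^{-|x|}} \;=\; \frac{1}{c}\, d_\gamma(A,A_i) \enspace,
\]
and taking the supremum over $x \in \sstar$ gives $\sup_{x \in \sstar} |f_A(x) - f_{A_i}(x)|/g(|x|) \le d_\gamma(A,A_i)/c$. By hypothesis $d_\gamma(A,A_i) \to 0$, so the right-hand side tends to $0$, which is exactly \eqref{eqn:inpcon}; hence $d_\gamma$ is input $g$-continuous.

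I do not expect any genuine obstacle: the entire argument rests on Corollary~\ref{cor:closedformd}, and the bound it produces is in fact uniform in $i$ and \emph{linear} in $d_\gamma(A,A_i)$, giving a quantitative rate rather than a mere limit. The two minor points to be careful about are the upgrade from asymptotic to uniform domination in the second step (absorbing the finitely many indices $l < l_0$ into the constant) and recording that the infinite extension $y$ used in the first step exists because $\Sigma \neq \emptyset$; both are routine.
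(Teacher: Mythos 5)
Your proposal is correct and follows essentially the same route as the paper: both extract the pointwise bound $|f_A(x)-f_{A_i}(x)|\,\gamma^{|x|} \le d_\gamma(A,A_i)$ from the closed form in Corollary~\ref{cor:closedformd} and then use positivity of $g$ together with $g(l)=\Omega(\gamma^{-l})$ to get $\inf_{l} g(l)\gamma^{l} > 0$ (your uniform constant $c$), so the ratio is bounded by $d_\gamma(A,A_i)/c \to 0$. Your write-up simply makes the infinite-extension step and the asymptotic-to-uniform upgrade explicit, which the paper leaves implicit.
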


Note that when $\gamma > 1$ (i.e.\ when dealing with weighted automata with $\rho(A) \leq 1$) we have $g(l) = 1 \in \Omega(\gamma^{-l})$. This shows that in the case of weighted automata $A$ where every transition operator $\tau_\sigma$ can be represented by a stochastic matrix---a fact that implies $\rho(A) = 1$---our $\gamma$-bisimulation pseudometric is property continuous with respect to the definition in \cite{Jaeger14}.

Further, if $g$ does not grow fast enough as a function of the size of
$x \in \sstar$, then our bisimulation pseudometric is not input
$g$-continuous. In particular, the proof of Theorem~\ref{thm:notgcontinuous} provides simple examples of cases where $d_{\gamma}$ is not input $g$-continuous.

\begin{theorem}
\label{thm:notgcontinuous}
Let $0 < \gamma < 1$. The pseudometric $d_{\gamma}$ from Definition \ref{def:pseudometricWA} is not input $g$-continuous for any $g(l)=c^{o(l)}$ with $c > 1$.
\end{theorem}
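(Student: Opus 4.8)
The plan is to exhibit, for every admissible $g$, an explicit family of pairs of automata $(A, A_i)$ with $\lim_i d_\gamma(A, A_i) = 0$ but with $\sup_{x} |f_A(x) - f_{A_i}(x)|/g(|x|)$ not tending to $0$. The simplest candidate is to work over a one-letter alphabet $\Sigma = \{\sigma\}$ and take $A$ to be the zero automaton (so $f_A \equiv 0$), while $A_i$ computes a function that is small for short strings but grows along long strings. Concretely, take $A_i$ one-dimensional with $\tau_\sigma = \mu$ for some $1 < \mu < 1/\gamma$ (so that $\gamma < 1/\rho(A_i) = 1/\mu$, keeping $d_\gamma$ well-defined), initial weight $\alpha_i = \varepsilon_i$ with $\varepsilon_i \to 0$, and final weight $1$; then $f_{A_i}(\sigma^\ell) = \varepsilon_i \mu^\ell$. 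Using Corollary~\ref{cor:closedformd}, $d_\gamma(A, A_i) = \sup_{x \in \sinfty} \sum_{t=0}^\infty \gamma^t |f_{A_i}(x_{\leq t})| = \varepsilon_i \sum_{t=0}^\infty (\gamma\mu)^t = \varepsilon_i/(1-\gamma\mu)$, which $\to 0$. On the other hand, for the string $x = \sigma^\ell$ we get $|f_A(x) - f_{A_i}(x)|/g(\ell) = \varepsilon_i \mu^\ell / g(\ell)$.

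The heart of the argument is then a counting/asymptotics step: because $g(\ell) = c^{o(\ell)}$ with $c > 1$, we have $g(\ell) = 2^{o(\ell)}$ (absorbing the constant into the $o(\ell)$), so $\mu^\ell/g(\ell) = 2^{\ell\log_2\mu - o(\ell)} \to \infty$. Hence for each fixed $i$ the supremum over $x$ already equals $+\infty$ (taking $\ell \to \infty$), and in particular $\lim_{i\to\infty} \sup_{x} |f_A(x)-f_{A_i}(x)|/g(|x|) = +\infty \neq 0$, so $d_\gamma$ fails input $g$-continuity. If one prefers a version where each inner supremum is finite, one instead couples $\ell$ to $i$: fix a rapidly increasing sequence $\ell_i$ and set $\varepsilon_i = \mu^{-\ell_i/2}$, so that $d_\gamma(A,A_i) = \mu^{-\ell_i/2}/(1-\gamma\mu) \to 0$ while $\varepsilon_i \mu^{\ell_i}/g(\ell_i) = \mu^{\ell_i/2}/g(\ell_i) \to \infty$ by the same asymptotic estimate; this again contradicts \eqref{eqn:inpcon}.

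I expect the main (minor) obstacle to be bookkeeping rather than conceptual: one must (i) verify carefully that the chosen $\gamma$ satisfies $\gamma < 1/\rho(A_1 - A_2)$ so that Definition~\ref{def:pseudometricWA} and Corollary~\ref{cor:closedformd} apply — here $\rho(A - A_i) = \max\{\rho(A), \rho(A_i)\} = \mu$, and since $\gamma < 1$ we can always pick $\mu \in (1, 1/\gamma)$, e.g. $\mu = (1 + 1/\gamma)/2$; and (ii) translate the hypothesis $g(\ell) = c^{o(\ell)}$ into the usable form $\log g(\ell) = o(\ell)$ (valid since $c > 1$ means $\log c > 0$), which is exactly what makes $\mu^{\ell}/g(\ell)$ blow up whenever $\mu > 1$. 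A remark worth including: the construction shows the failure is already visible on a unary alphabet and with one-dimensional automata, and that the threshold exponent is governed by $\rho(A)$ rather than by $\gamma$ alone — any growth rate $\mu \in (1, 1/\gamma)$ of the difference automaton defeats every sub-exponential $g$.
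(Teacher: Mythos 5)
Your proposal is correct and follows essentially the same strategy as the paper: a unary, one-state counterexample whose difference function grows exponentially in $|x|$, which defeats any $g(l)=c^{o(l)}$ once the hypothesis is read as $\log g(l)=o(l)$. The only difference is cosmetic --- the paper drives $d_\gamma(A,A_i)\to 0$ by letting the transition weight $1+2^{-i}$ tend to $1$ with fixed unit initial/final weights, while you fix a growth rate $\mu\in(1,1/\gamma)$ and shrink the initial weight $\varepsilon_i\to 0$ against the zero function; both arguments rest on the closed form of Corollary~\ref{cor:closedformd} and the same exponential-versus-subexponential comparison.
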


Proofs of these results are deferred to Appendix~\ref{sec:appinputcont}.

\section{An Undecidability Result}\label{sec:hardness}

In this section we will prove that given a weighted automaton $A = \wa$, a discount factor $\gamma < 1/\rho(A)$, and a threshold $\nu > 0$, it is undecidable to check whether $s_{A,\gamma}(\alpha) > \nu$. This implies that in general the seminorms and pseudometrics studied in the previous sections are not computable.

The proof of our undecidability result involves a reduction from an undecidable planning problem. Partially observable Markov decision processes (POMDPs) are a generalization of Markov Decision Processes (MDPs) where we have a set of observations $\Omega$ and conditional observation probabilities $\mathcal{O}$. Each state emits some observation $o\in\Omega$ with a certain probability, and so we have a belief over which state we are in after taking an action and observing $o$. An MDP is a special case of a POMDP where each state has a unique observation, and an unobservable Markov decision process (UMDP) is a special case of a POMDP where all the states emit the same observation. While planning for infinite-horizon UMDPs is undecidable \cite{Madani03}, planning for finite-horizon POMDPs is decidable. 

Formally, a UMDP is a tuple $U = \umdp$ where $\Sigma$ is a finite set of actions, $Q$ is a finite set of states, $\alpha : Q \to [0,1]$ is a probability distribution over initial states in $Q$, $\beta_\sigma : Q \to \R$ represents the rewards obtained by taking action $\sigma$ from every state in $Q$, $T_\sigma : Q \times Q \to [0,1]$ is the transition kernel between states for action $\sigma$ (i.e.\ $T_\sigma(q,q')$ is the probability of transitioning to $q'$ given that action $\sigma$ is taken in $q$), and $0 < \gamma < 1$ is a discount factor. The value $V_U(x)$ of an infinite sequence of actions $x \in \sinfty$ in $U$ is the expected discounted cumulative reward collected by executing the actions in $x$ in $U$ starting from a state drawn from $\alpha$. This can be obtained as follows:
\begin{equation}
V_U(x) = \sum_{t=1}^\infty \gamma^{t-1} \alpha^\top T_{x_{\leq t-1}} \beta_{x_t} \enspace,
\end{equation}
where $T_y = T_{y_1} \cdots T_{y_t}$ for any finite string $y = y_1 \cdots y_t$ and $T_{\epsilon} = I$. The following undecidability result was proved by Madani et al.\ in \cite{Madani03}.

\begin{theorem}[Theorem 4.4 in \cite{Madani03}]\label{thm:umdpundec}
The following problem is undecidable: given a UMDP $U$ and a threshold $\nu$ decide whether there exists a sequence of actions $x \in \sinfty$ such that $V_U(x) > \nu$.
\end{theorem}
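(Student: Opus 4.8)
The plan is to prove undecidability by reduction from a known undecidable problem about probabilistic finite automata (PFA). The cleanest source is the \emph{reachability-probability threshold problem}: given a PFA over an alphabet $\Sigma$ with stochastic transition matrices $\{M_\sigma\}_{\sigma \in \Sigma}$, initial distribution $\alpha$, a set of accepting states with indicator $\eta$, and a rational cut-point $\lambda \in (0,1)$, decide whether there is a finite word $w \in \sstar$ for which the probability of the run \emph{ever} entering an accepting state exceeds $\lambda$. This is the classical PFA emptiness problem transported to an absorbing-accept automaton, and it remains undecidable. I would reduce an instance $(\mathcal{A}, \lambda)$ of this problem to a UMDP value-threshold instance $(U, \nu)$ of the form in Theorem~\ref{thm:umdpundec}.

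For the construction, I would identify each symbol of $\mathcal{A}$ with an action of $U$ and each PFA state with a UMDP state, making every accepting state absorbing (it self-loops under all actions). I would use the stochastic matrices $M_\sigma$ directly as the kernels $T_\sigma$ (legitimate, since they are stochastic), take $\alpha$ as the initial distribution, and set the reward $\beta_\sigma(q)$ to be $1$ when $q$ is accepting and $0$ otherwise, uniformly over actions. With these choices the value of any infinite action sequence $x \in \sinfty$ becomes $V_U(x) = \sum_{s \geq 0} \gamma^{s} p_s(x)$, where $p_s(x)$ is the probability of occupying an accepting state after the first $s$ actions. Because accepting states are absorbing, $p_s(x)$ is nondecreasing in $s$, and its supremum over all action sequences and all $s$ equals the reachability probability $p^* := \sup_{w} \Pr[\text{run on } w \text{ ever accepts}]$ that governs the source instance.

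The correspondence to establish is that $\sup_x V_U(x) > \nu$ holds if and only if $(\mathcal{A},\lambda)$ is a YES instance, for a suitable $\nu$ built from $\lambda$ and $\gamma$. One direction is immediate: in a NO instance $p_s(x) \leq \lambda$ for every $x$ and every $s$, so $V_U(x) \leq \lambda/(1-\gamma)$, suggesting $\nu = \lambda/(1-\gamma)$. The reverse direction is the heart of the argument and the step I expect to be the main obstacle, precisely because of the discount factor: a witnessing word may be long and have a small margin above $\lambda$, in which case the discounted prefix $\gamma^n \lambda/(1-\gamma)$ reached only at step $n=|w|$ can fall below the naive threshold, so that discounting collapses the gap between YES and NO instances. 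The separation is by contrast transparent for the \emph{limiting-average} reward criterion, where the relevant quantity is exactly $\sup_x \lim_s p_s(x) = p^*$ and directly encodes reachability-emptiness with no discount distortion. I would therefore first prove undecidability for the average-reward criterion and then transfer it to the fixed-discount criterion of the theorem, amplifying the PFA so that any word exceeding the cut-point can be taken to drive the accepting mass close to $1$, and choosing the discount $\gamma$ (which is part of the constructed instance) close enough to $1$ to absorb the residual $\gamma^{n}$ loss into $\nu$. Verifying that this $\gamma \to 1$ transfer yields a strict, instance-independent gap between YES and NO instances is the delicate point on which the whole reduction turns.
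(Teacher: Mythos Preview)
The paper does not prove this theorem: it is quoted verbatim as Theorem~4.4 of Madani--Hanks--Condon and used as a black box (the only additional observation is that the UMDP produced by their reduction has non-negative action-independent rewards, which the authors extract by ``a careful inspection'' of the cited proof). So there is no in-paper proof to compare your proposal against.

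Assessing your sketch on its own merits: the gap is exactly at the point you flag as ``delicate,'' and your proposed patch does not close it. The reduction must output a concrete instance $(U,\nu)$, with a fixed $\gamma<1$, computably from the PFA instance---in particular \emph{before} knowing whether the instance is a YES, let alone the length $|w|$ or the margin $p-\lambda$ of any witnessing word. For every fixed $\gamma<1$ and every choice of $\nu$, a YES instance whose only witnesses are long words with tiny margin will have $\gamma^{|w|}p \le \lambda$, so your YES direction fails. Your two suggested remedies do not help: running $k$ parallel copies raises the acceptance probability but leaves $|w|$ unchanged, so the $\gamma^{|w|}$ factor still kills the YES direction; adding a ``reset'' action to allow repeated independent attempts destroys the NO direction, since in a NO instance with any positive acceptance probability the reset-and-retry policy drives $p_s\to 1$ and hence $V_U$ above $\nu$. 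Starting instead from the value-$1$ problem (``is $\sup_w \Pr[\text{accept}]=1$?'') does not rescue the argument either, because the word length needed to witness probability close to $1$ is again unbounded and uncomputable from the instance. In short, the generic PFA-emptiness $\to$ discounted-UMDP reduction you outline works cleanly for the limiting-average criterion, as you note, but the transfer to a fixed discount is not a matter of ``choosing $\gamma$ close to $1$''---that choice cannot be made uniformly. The Madani--Hanks--Condon argument avoids this obstacle by encoding a machine computation directly into the UMDP so that reward is accrued along the run rather than only upon reaching a terminal accepting configuration; that structural feature is what makes the discounted criterion tractable, and it is not reproduced by the black-box PFA translation you propose.
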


Given a UMDP $U = \umdp$, we say that $U$ has action-independent rewards
if $\beta_\sigma = \beta$ for all $\sigma \in \Sigma$. We say that $U$ has
non-negative rewards if $\beta_\sigma(q) \geq 0$ for all $q \in Q$ and
$\sigma \in \Sigma$. A careful inspection of the proof in \cite{Madani03} reveals that in fact the reduction provided in the paper always produces as output a UMDP with non-negative action-independent rewards. Thus, we have the following corollary, which forms the basis of our reduction showing that $s_\gamma$ is not computable.

\begin{corollary}\label{cor:umdpundec}
The problem in Theorem~\ref{thm:umdpundec} remains undecidable when restricted to UMDP with non-negative action-independent rewards.
\end{corollary}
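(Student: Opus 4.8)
The plan is to avoid re-examining the construction of \cite{Madani03} and instead give a direct computable reduction from the unrestricted problem of Theorem~\ref{thm:umdpundec} to its restriction to UMDP with non-negative action-independent rewards. The reduction will proceed in two independent stages, each transforming a UMDP so that the value of every action sequence changes by a controlled amount.

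First I would remove negative rewards. Given $U = \umdp$, set $c = \max\{0,\, -\min_{\sigma \in \Sigma, q \in Q}\beta_\sigma(q)\}$ and let $\tilde{U}$ be $U$ with each $\beta_\sigma$ replaced by $\beta_\sigma + c$ (add $c$ to every coordinate), leaving everything else unchanged. Since $\alpha^\top T_{x_{\leq t-1}}$ is a probability distribution for every $x$ and every $t$, adding $c$ to all rewards increases $V_U(x)$ by exactly $c\sum_{t \geq 1}\gamma^{t-1} = c/(1-\gamma)$, a quantity independent of $x$; hence $V_U(x) > \nu$ iff $V_{\tilde{U}}(x) > \nu + c/(1-\gamma)$, and $\tilde{U}$ has non-negative rewards.

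Next I would remove the action dependence by enlarging the state space to $Q' = Q \sqcup (Q \times \Sigma)$. I keep $\alpha$ as the initial distribution, supported on the copy of $Q$; I give every state of $Q$ reward $0$ and the state $(q,\sigma)$ reward $\beta_\sigma(q) + c$, which depends only on the state and is non-negative, so the reward function is now action-independent. For the transitions: from $q \in Q$, every action $\sigma$ moves deterministically to $(q,\sigma)$; from $(q,\sigma)$, any action $\sigma'$ moves to $(q',\sigma')$ with probability $T_\sigma(q,q')$. These are valid stochastic kernels and $\alpha$ is a valid distribution on $Q'$, so the result $U'$ is a legitimate UMDP with non-negative action-independent rewards and the same discount factor $\gamma$. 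Tracking the state distribution at each step shows that, for the same action sequence $x$, the reward $\tilde{U}$ collects at step $t$ is deferred in $U'$ to the state entered at step $t+1$ (at step $1$, $U'$ collects $0$), so that $V_{U'}(x) = \gamma\, V_{\tilde{U}}(x)$; hence $V_{\tilde{U}}(x) > \nu + c/(1-\gamma)$ iff $V_{U'}(x) > \gamma\nu + \gamma c/(1-\gamma)$.

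Composing the two stages yields a computable map $(U,\nu) \mapsto (U',\nu')$ with $\nu' = \gamma\nu + \gamma c/(1-\gamma)$, sending the problem of Theorem~\ref{thm:umdpundec} into its restriction to non-negative action-independent rewards with the answer preserved, so decidability of the restricted problem would contradict Theorem~\ref{thm:umdpundec}. I expect the main obstacle to be purely the bookkeeping in the second stage---confirming the one-step discount shift and that the enlarged kernels and initial distribution are well formed---together with keeping the threshold constants straight across both stages; no deeper difficulty seems to arise. (If one prefers, the alternative is the route hinted at in the text: verify by inspection that the reduction in \cite{Madani03} already outputs a UMDP of this restricted shape, but the argument above has the advantage of not requiring that construction to be reproduced.)
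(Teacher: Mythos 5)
Your reduction is correct, but it is a genuinely different route from the paper's. The paper proves the corollary by \emph{inspection}: it observes that the reduction of Madani et al.\ \cite{Madani03} already outputs UMDPs whose rewards are non-negative and action-independent, so no new construction is needed. You instead give a self-contained, computable reduction from the unrestricted problem of Theorem~\ref{thm:umdpundec} to the restricted one: the constant shift works because $\alpha^\top T_{x_{\leq t-1}}$ is a probability vector for every $t$, so each $V_U(x)$ is translated by exactly $c/(1-\gamma)$ independently of $x$; and your state-splitting gadget $Q' = Q \sqcup (Q\times\Sigma)$, which records the last action in the state and defers each reward by one step, gives the exact identity $V_{U'}(x) = \gamma V_{\tilde U}(x)$ for every $x \in \sinfty$, so strict threshold comparisons transfer with the computable threshold $\nu' = \gamma\nu + \gamma c/(1-\gamma)$ (note $\gamma>0$, so the sup and the strict inequality are preserved). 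Both kernels are stochastic and the initial distribution is legitimate, so $U'$ is a bona fide UMDP in the restricted class, and the corollary as stated (which constrains the UMDP, not the threshold) follows. What each approach buys: the paper's argument is a one-line observation but requires the reader to trust or re-verify the internals of \cite{Madani03}; yours is black-box in \cite{Madani03} and hence more robust (it would survive any replacement proof of Theorem~\ref{thm:umdpundec}), at the cost of a $|Q|(1+|\Sigma|)$-state gadget and some threshold bookkeeping. The only caveat worth flagging is cosmetic: if one later wants the threshold to be positive (as in the paper's subsequent WFA theorem, which takes $\nu>0$), the inspection route inherits whatever thresholds Madani et al.\ produce, whereas with your reduction positivity of $\nu'$ holds whenever $\gamma\nu + \gamma c/(1-\gamma) > 0$, which is not guaranteed for arbitrary inputs; this does not affect the corollary as stated.
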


\begin{theorem}
The following problem is undecidable: given a weighted automaton $A = \wa$, a discount factor $\gamma < 1/\rho(A)$, and a threshold $\nu > 0$, decide whether $s_{A,\gamma}(\alpha) > \nu$.
\end{theorem}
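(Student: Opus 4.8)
The plan is to reduce from the undecidable problem isolated in Corollary~\ref{cor:umdpundec}: given a UMDP $U$ with non-negative action-independent rewards and a threshold $\nu$, decide whether there exists $x \in \sinfty$ with $V_U(x) > \nu$. The bridge between the two problems is the closed-form expression of Theorem~\ref{thm:closedform}: specializing it to $v = \alpha$ gives
\[
s_{A,\gamma}(\alpha) = \sup_{x \in \sinfty} \sum_{t=0}^\infty \gamma^t \left| f_A(x_{\leq t}) \right| ,
\]
whereas, after reindexing the series defining $V_U$ and using that the rewards are action-independent (so $\beta_\sigma = \beta$ for every $\sigma$), one has $V_U(x) = \sum_{t=0}^\infty \gamma^t\, \alpha^\top T_{x_{\leq t}} \beta$. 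These two functionals have exactly the same shape; it remains to realize the data of $U$ as a WFA so that they coincide.

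Concretely, from $U = \umdpb$ with common reward vector $\beta$ I would build the WFA $A = \waumdp$ with state space $\R^Q$, initial vector the initial distribution $\alpha$, final form the reward vector $\beta$, and transition operators $\tau_\sigma$ read off the kernels $T_\sigma$ (transposed if needed to match the convention extending $\tau$ to strings), so that $f_A(x_{\leq t}) = \beta(\tau_{x_{\leq t}}(\alpha)) = \alpha^\top T_{x_{\leq t}} \beta$ for all $x$ and $t$. Because $\alpha$, $\beta$ and each $T_\sigma$ have non-negative entries, every $f_A(x_{\leq t})$ is non-negative, so the absolute values in the closed form disappear and $s_{A,\gamma}(\alpha) = \sup_x \sum_{t \ge 0} \gamma^t f_A(x_{\leq t}) = \sup_{x \in \sinfty} V_U(x)$. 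I would then verify that the hypothesis $\gamma < 1/\rho(A)$ is automatic: each $T_\sigma$ is a stochastic matrix, so it has spectral radius $1$, and every finite product of stochastic matrices is again stochastic, hence of operator norm $1$ in the appropriate induced norm; together with $\rho(A) \le \sup_\sigma \norm{\tau_\sigma}$ and the invariance of the joint spectral radius under transposition, this gives $\rho(A) = 1$. Hence the UMDP condition $0 < \gamma < 1$ is precisely the requirement $\gamma < 1/\rho(A)$ that makes $s_{A,\gamma}$ well defined.

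To finish, observe that $\sup_x V_U(x) > \nu$ holds if and only if some $x$ satisfies $V_U(x) > \nu$, so the evidently computable map $(U,\nu) \mapsto (A,\gamma,\nu)$ is a reduction from the problem of Corollary~\ref{cor:umdpundec} to deciding $s_{A,\gamma}(\alpha) > \nu$. Since the undecidable instances can be taken with $\nu > 0$, the theorem follows. (Using the difference automaton with the trivial $0$-dimensional WFA as the second argument, the same construction also shows that the bisimulation distance $d_\gamma$ between weighted automata is uncomputable, which is the remark made in the introduction to this section.)

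The argument has no deep obstacle; essentially all of the effort is in the matching. The single point that genuinely matters is that one must start from the action-independent, non-negative-reward restriction of Corollary~\ref{cor:umdpundec}, not from the bare Theorem~\ref{thm:umdpundec}: the operator $F_{A,\gamma}$ carries one final form $\beta$ rather than a per-step action-dependent reward, and the closed form contains $|f_A(\cdot)|$ rather than a signed quantity, so without both restrictions the seminorm would fail to reproduce $V_U$. The rest---lining up the two indexings, choosing the transpose convention so that $f_A(x_{\leq t}) = \alpha^\top T_{x_{\leq t}} \beta$, and checking the spectral-radius computation---is routine bookkeeping.
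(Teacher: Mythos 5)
Your proposal is correct and follows essentially the same route as the paper: reduce from the non-negative, action-independent-reward restriction of the UMDP value problem (Corollary~\ref{cor:umdpundec}), encode the UMDP data as a WFA on $\R^Q$ so that the closed form of Theorem~\ref{thm:closedform} gives $s_{A,\gamma}(\alpha)=\sup_{x\in\sinfty}V_U(x)$ (non-negativity removing the absolute values, row-stochasticity giving $\rho(A)\leq 1$ so $\gamma<1/\rho(A)$). The only cosmetic difference is that you argue $\rho(A)=1$ exactly, whereas the paper only needs $\rho(A)\leq 1$.
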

\begin{proof}
Let $U = \umdpb$ be a UMDP with non-negative action-independent rewards. With each UMDP of this form we associate the weighted automaton $A = \waumdp$. Here we assume that the linear form $\beta : \R^Q \to \R$ is given by $\beta(v) = v^\top \beta$, and that the linear operators $\tau_\sigma : \R^Q \to \R^Q$ are given by $\tau_\sigma(v) = v^\top T_\sigma$.

Note that the matrices $T_\sigma$ are row-stochastic and therefore we have $\rho(A) \leq \max_{\sigma} \norm{\tau_\sigma}_\infty = 1$. Thus, the discount factor in $U$ satisfies $\gamma < 1 \leq 1 / \rho(A)$ and the bisimulation seminorm $s_{A,\gamma}$ associate with $A$ is defined. Using that $U$ has non-negative action-independent rewards we can write for any $x \in \sinfty$:
\begin{align*}
V_U(x) &= \sum_{t=1}^\infty \gamma^{t-1} \alpha^\top T_{x_{\leq t-1}} \beta 
= \sum_{t=0}^\infty \gamma^{t} \alpha^\top T_{x_{\leq t}} \beta 
= \sum_{t=0}^\infty \gamma^{t} |\alpha^\top T_{x_{\leq t}} \beta| 
= \sum_{t=0}^\infty \gamma^{t} |\beta(\tau_{x \leq t}(\alpha))| \enspace.
\end{align*}
Therefore we have the relation $s_{A,\gamma}(\alpha) = \sup_{x \in \sinfty} V_U(x)$ between the bisimulation seminorm of $A$ and the value of $U$. Since deciding whether  $V_U(x) > \nu$ for some $x \in \sinfty$ is undecidable, the theorem follows.
\end{proof}

\section{Application: Spectral Learning for WFA}\label{sec:learning}

An important problem in machine learning is that of finding a weighted automaton $\hat{A}$ approximating an unknown automaton $A$ given only access to data generated by $A$.
A variety of algorithms in different learning frameworks have been considered in the literature; see \cite{cai2015} for an introductory survey. In most learning scenarios it is impossible to exactly recover the target automaton $A$ from a finite amount of data. In that case one aims for algorithms with formal guarantees of the form ``the output $\hat{A}$ automaton gets closer to $A$ as the amount of training data grows''. To prove such a result one obviously needs a way to measure the distance between two WFA. In this section we show how our $\gamma$-bisimulation pseudometric can be used to provide formal learning guarantees for a family of learning algorithms widely referred to as spectral learning. We also briefly discuss the case for behavioural metrics in automata learning problems and compare our metric to other metrics used in the spectral learning literature.

Generally speaking, spectral learning algorithms for WFA work in two phases: the first phase uses the data obtained from the target automaton $A$ to estimate a finite sub-block of the Hankel matrix of $f_A$; the second phase computes the singular value decomposition of this Hankel matrix and uses the corresponding singular vectors to solve a set of systems of linear equations yielding the weights of the output WFA $\hat{A}$. The Hankel matrix of a function $f : \sstar \to \R$ is an infinite matrix $H_{f} \in \R^{\sstar \times \sstar}$ with entries given by $H_{f}(x,y) = f(x y)$, where $x y$ denotes the string obtained by concatenating the prefix $x$ with the suffix $y$. Spectral learning algorithms work with a finite sub-block $H \in \R^{P \times S}$ of this Hankel matrix indexed by a set of prefixes $P \subset \sstar$ and a set of suffixes $S \subset \sstar$. The pair $B = (P,S)$ is usually an input to the algorithm, in which case formal learning guarantees can be provided under the assumption that $B$ is complete for $H_{f_A}$. This assumption essentially states that the sub-block of $H_{f_A}$ indexed by $B$ contains enough information to recover a WFA equivalent to $A$, and is composed of a syntactic condition ensuring $B$ contains a set of prefixes and their extensions by any symbol in $\Sigma$, and an algebraic condition ensuring the rank of the Hankel sub-matrix indexed by $B$ has the same rank as the full Hankel matrix $H_{f_A}$.
We refer the reader to \cite{cai2015,mlj13spectral} for further details about the spectral learning algorithm and a discussion of the completeness property for $B$. In the sequel we focus on the analysis of the error in the output of the spectral learning algorithm, and show how to provide learning guarantees in terms of our distance $d_{\gamma}$.

The following lemma encapsulates the first step of the analysis of spectral learning algorithms. It shows how the error between the operators of $A$ and $\hat{A}$ depends on the error between the true and the approximated Hankel matrix as measured by the standard operator $\ell_2$-norm.

\begin{lemma}\label{lem:learn}
Let $A = \wa$ be a WFA and let $H$ be a finite sub-block of the Hankel matrix $H_{f_A}$ indexed by $B = (P,S)$. Suppose $\hat{A} = \hwa$ is the WFA returned by the spectral learning algorithm using an estimation $\hat{H}$ of $H$. Let $\norm{\cdot}$ be any norm on $V$. If $B$ is complete, then we have $\norm{\alpha - \hat{\alpha}}, \norm{\beta - \hat{\beta}}_*, \max_{\sigma \in \Sigma} \norm{\tau_\sigma - \hat{\tau}_\sigma} \leq O(\norm{H - \hat{H}}_2)$ as $\norm{H - \hat{H}}_2 \to 0$. Furthermore, the constants hidden in the big-$O$ notation only depend on the norm $\norm{\cdot}$, the Hankel sub-block indices $B = (P,S)$, and the size of the alphabet $|\Sigma|$.
\end{lemma}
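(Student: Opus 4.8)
The plan is to unfold the spectral learning algorithm into a composition of a handful of elementary matrix operations and to show that each is locally Lipschitz around the exact Hankel block $H$, with Lipschitz constants controlled by the least non-zero singular value of $H$ and by the dimensions of the block, all of which are determined by $f_A$, $B$, and $|\Sigma|$. We may assume without loss of generality that $A$ is minimal (completeness is a property of $H_{f_A}$, and if $A$ is not minimal we first replace it by its minimization, absorbing the resulting fixed change of basis into the constants) and that $A$ is presented in the basis that the algorithm recovers from the exact block $H$; any WFA equivalent to that output differs from it by a fixed change of basis, which again only affects constants and the norm $\norm{\cdot}$.

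First, recall the algorithm (see \cite{cai2015,mlj13spectral}). Write $h_P \in \R^P$ and $h_S \in \R^S$ for the $\epsilon$-column and $\epsilon$-row of the Hankel block, and $H_\sigma \in \R^{P \times S}$ for the block with entries $f_A(p\sigma s)$. By the syntactic part of completeness, $P$ is closed under appending any symbol, so $H_\sigma$ is a sub-block of $H$ (indexed by prefixes $\{p\sigma : p \in P\}$ and suffixes $S$); since extracting a sub-block does not increase the operator norm, $\norm{H_\sigma - \hat{H}_\sigma}_2 \le \norm{H - \hat{H}}_2$, and likewise $\norm{h_P - \hat{h}_P}_2, \norm{h_S - \hat{h}_S}_2 \le \norm{H - \hat{H}}_2$. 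The algorithm computes a rank-$n$ truncated SVD $H = U D V^\top$ with $n = \operatorname{rank}(H)$, forms the factorization $H = P_{\mathrm{fac}} S_{\mathrm{fac}}$ with $P_{\mathrm{fac}} = U D$ and $S_{\mathrm{fac}} = V^\top$, and outputs $\alpha^\top = h_S^\top S_{\mathrm{fac}}^{+}$, $\beta = P_{\mathrm{fac}}^{+} h_P$, and $\tau_\sigma = P_{\mathrm{fac}}^{+} H_\sigma S_{\mathrm{fac}}^{+}$; the hatted WFA is the same recipe applied to $\hat{H}$. Completeness enters decisively here: its algebraic part guarantees $\operatorname{rank}(H) = \operatorname{rank}(H_{f_A}) = \dim(A) = n$, so $\sigma_n(H) > 0$ while $\sigma_{n+1}(H) = 0$, i.e.\ $H$ has a strictly positive spectral gap.

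Next, invoke matrix perturbation theory. Weyl's inequality gives $|\sigma_i(\hat{H}) - \sigma_i(H)| \le \norm{H - \hat{H}}_2$, so once $\norm{H - \hat{H}}_2 \le \sigma_n(H)/2$ the rank-$n$ truncated SVD of $\hat{H}$ is well defined and $\norm{\hat{P}_{\mathrm{fac}}^{+}}_2, \norm{\hat{S}_{\mathrm{fac}}^{+}}_2 = O(1/\sigma_n(H))$, with the same bound for $P_{\mathrm{fac}}^{+}, S_{\mathrm{fac}}^{+}$. Wedin's $\sin\Theta$ theorem, using the spectral gap, shows the top-$n$ left and right singular subspaces of $\hat{H}$ lie within $O(\norm{H - \hat{H}}_2/\sigma_n(H))$ of those of $H$; choosing the orthogonal matrix that optimally aligns $\hat{U}$ with $U$ (a Procrustes rotation) and absorbing it into $\hat{A}$ as a change of basis — which leaves $f_{\hat{A}}$ unchanged — yields $\norm{\hat{U} - U}_2, \norm{\hat{V} - V}_2, \norm{\hat{D} - D}_2 = O(\norm{H - \hat{H}}_2)$. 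Since $\alpha$, $\beta$, and each $\tau_\sigma$ are products of a bounded number of factors, each within $O(\norm{H - \hat{H}}_2)$ of its hatted counterpart and with norm bounded in terms of $\sigma_n(H)$, $\norm{H}_2$, and the dimensions, the elementary estimate $\norm{XY - \hat{X}\hat{Y}}_2 \le \norm{X}_2 \norm{Y - \hat{Y}}_2 + \norm{X - \hat{X}}_2 \norm{\hat{Y}}_2$ applied repeatedly gives $\norm{\alpha - \hat{\alpha}}_2, \norm{\beta - \hat{\beta}}_2, \max_\sigma \norm{\tau_\sigma - \hat{\tau}_\sigma}_2 = O(\norm{H - \hat{H}}_2)$. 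Passing from the Euclidean norm on $\R^n$ to an arbitrary norm $\norm{\cdot}$ on $V$ and its dual costs only fixed norm-equivalence constants, and every hidden constant traces back to $\sigma_n(H)$, $\norm{H}_2$, $|P|$, $|S|$, and $|\Sigma|$ — all determined by $f_A$, $B$, $|\Sigma|$, and the chosen norm.

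The main obstacle is the non-uniqueness of the SVD and of the recovered WFA: the difference $\alpha - \hat{\alpha}$ is only meaningful once the singular bases of $\hat{H}$ have been aligned with those of $H$, and the resulting bound is inversely governed by the spectral gap $\sigma_n(H)$, whose positivity is precisely what the algebraic part of completeness supplies. Handling this cleanly — via Wedin's theorem together with a Procrustes alignment absorbed as a harmless change of basis on $\hat{A}$ — and the bookkeeping confirming that every constant depends only on $\norm{\cdot}$, $B$, and $|\Sigma|$, is the one genuinely delicate point; the remainder is a routine perturbation-of-a-product computation.
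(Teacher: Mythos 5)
Your overall route is sound, and it is essentially the analysis that the paper hides behind its one-line proof: the paper does not argue this lemma at all, it simply cites Lemmas~9.3.5 and~6.3.2 of \cite{ballethesis}, which carry out exactly the kind of SVD perturbation argument you reconstruct (sub-blocks $h_P$, $h_S$, $H_\sigma$ controlled by $\norm{H-\hat{H}}_2$ thanks to the syntactic part of completeness; the algebraic part giving $\mathrm{rank}(H)=n$ and hence a gap $\sigma_n(H)>0$; Weyl for singular values and pseudoinverse norms; Wedin for singular subspaces; then the product perturbation rule). What your self-contained version buys is transparency about the constants: they trace to $\sigma_n(H)$, $\norm{H}_2$, $|P|$, $|S|$, $|\Sigma|$ and norm-equivalence constants, i.e.\ they depend on $A$ through $H$. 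That is slightly stronger than the lemma's literal wording (``only on $\norm{\cdot}$, $B$, $|\Sigma|$'') but matches how the paper actually uses the lemma in Theorem~\ref{thm:spectral}, whose proof explicitly admits dependence ``on $A$ through Lemma~\ref{lem:learn}''. Your WLOG reductions are also harmless: completeness forces $\dim V=\mathrm{rank}(H)$, so $A$ is in effect minimal and fixed up to a basis change that only enters the constants, and a basis change of $\hat{A}$ leaves $f_{\hat{A}}$, hence the downstream quantity $d_\gamma(A,\hat{A})$ in Lemma~\ref{lem:boundd}, unchanged.

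The one step you assert rather than prove is the alignment, and as written it does not quite work: a single change of basis $Q$ acts on the factorization as $(\hat{U}\hat{D},\hat{V}^\top)\mapsto(\hat{U}\hat{D}Q,\,Q^{-1}\hat{V}^\top)$, so one matrix cannot independently realize ``$\hat{U}\approx U$'' and ``$\hat{V}\approx V$'' as your Procrustes sentence suggests; this is precisely the delicate point (and it is unavoidable when $H$ has repeated singular values). The fix is short: the two required alignments coincide to the needed order, because $UD=HV$ and $\hat{U}\hat{D}=\hat{H}\hat{V}$ give $\hat{D}^{-1}\hat{U}^\top UD=\hat{D}^{-1}\hat{U}^\top HV=\hat{V}^\top V+O\bigl(\norm{H-\hat{H}}_2/\sigma_n(H)\bigr)$, and by Wedin $\hat{V}^\top V$ is within $O(\norm{H-\hat{H}}_2)$ of an orthogonal matrix; so take $Q$ to be (the orthogonal polar factor of) $\hat{V}^\top V$ and all three of your closeness claims for $\hat{U}$, $\hat{V}$, $\hat{D}$ follow. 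Alternatively, and closer to the cited thesis argument, compare $\hat{A}$ not with the SVD-of-$H$ automaton but with the automaton obtained by running the algorithm on the exact $H$ using the projection computed from $\hat{H}$: once $\norm{H-\hat{H}}_2\le\sigma_n(H)/2$ this reference automaton still computes $f_A$, the comparison is basis-consistent by construction, and Wedin/Procrustes can be dispensed with entirely. With either patch your proof is complete.
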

\begin{proof}
Combine Lemma~9.3.5 and Lemma~6.3.2 from \cite{ballethesis}.
\end{proof}

The results from \cite{ballethesis} also provide explicit expressions for the constants hidden in the big-$O$ notation. Concentration of measure for random matrices can be used to show that as the amount of training data increases then the distance between $H$ and $\hat{H}$ converges to zero with high probability (see e.g.\ \cite{denis2016}). Thus, Lemma~\ref{lem:learn} implies that as more training data becomes available, spectral learning will output a WFA $\hat{A}$ converging to $A$.

The last step in the analysis involves showing that as the weights of $\hat{A}$ get closer to the weights of $A$, the behaviour of the two automata also gets closer. Invoking the parameter continuity of $d_\gamma$ (Theorem~\ref{thm:parcont}) one readily sees that $d_\gamma(A,\hat{A}) \to 0$ as $\norm{H - \hat{H}}_2 \to 0$. This provides a proof of consistency of spectral learning with respect to the $\gamma$-bisimulation pseudometric. However, machine learning applications often require more precise information about the convergence rate of $d_\gamma(A,\hat{A})$ in order to, for example, compute the amount of data required to achieve a certain error. The following result provides such rate of convergence in the case where the target automaton is irreducible.

\begin{theorem}\label{thm:spectral}
Let $A = \wa$ be an irreducible WFA and let $H$ be a finite sub-block of the Hankel matrix $H_{f_A}$ indexed by $B = (P,S)$. Suppose $\hat{A} = \hwa$ is the WFA returned by the spectral learning algorithm using an estimation $\hat{H}$ of $H$. Suppose $B$ is complete. Then for any $\gamma < 1/\rho(A)$ we have $d_{\gamma}(A,\hat{A}) \leq O(\norm{H - \hat{H}}_2)$ as $\norm{H - \hat{H}}_2 \to 0$. Furthermore, the hidden constants in the big-$O$ notation only depend on $A$, $\gamma$, the Hankel block indices $B = (P,S)$, and the size of the alphabet $|\Sigma|$.
\end{theorem}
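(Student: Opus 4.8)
### Proof proposal for Theorem~\ref{thm:spectral}

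The plan is to combine the bound from Lemma~\ref{lem:learn} on the parameter error of the learned automaton with a quantitative version of parameter continuity that gives a \emph{linear} rate. The key point is that irreducibility of $A$ is exactly the hypothesis that unlocks the explicit Lipschitz estimate of $d_\gamma$ in terms of parameter distance. So the first thing I would do is invoke Lemma~\ref{lem:learn} to get $\varepsilon := \max\{\norm{\alpha-\hat\alpha}, \norm{\beta-\hat\beta}_*, \max_\sigma\norm{\tau_\sigma-\hat\tau_\sigma}\} \leq O(\norm{H-\hat H}_2)$ as $\norm{H-\hat H}_2 \to 0$, with constants depending only on the norm, $B$, and $|\Sigma|$. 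From this point on, the job reduces to showing $d_\gamma(A,\hat A) \leq O(\varepsilon)$ with a constant depending only on $A$ and $\gamma$.

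The heart of the argument is Lemma~\ref{lem:boundd} (the technical estimate of $d_\gamma(A,A_i)$ in terms of parameter distance with respect to a suitable norm, mentioned in Section~\ref{sec:continuity} as one of the two ingredients of the proof of Theorem~\ref{thm:parcont}). I would first fix $\gamma < 1/\rho(A)$ and choose $\delta > 0$ with $\gamma < 1/(\rho(A)+\delta)$; the difference automaton $A - \hat A$ has $\rho(A-\hat A) = \max\{\rho(A),\rho(\hat A)\}$, and since $\rho$ is continuous on compact sets of operators (Barabanov's theorem, cited in Section~\ref{sec:jsr}), for $\hat A$ close enough to $A$ we still have $\rho(A-\hat A) < \rho(A)+\delta$, so $\gamma < 1/\rho(A-\hat A)$ and $d_\gamma(A,\hat A)$ is well defined. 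Applying the closed-form expression of Corollary~\ref{cor:closedformd} together with the quasi-extremal norm machinery of Theorem~\ref{thm:quasiextremal}, I would bound $d_\gamma(A,\hat A)$ by a geometric-type series in $\gamma(\rho(A)+\eta)$ whose coefficient is linear in $\varepsilon$; concretely, writing $f_A(x_{\leq t}) - f_{\hat A}(x_{\leq t})$ as a telescoping sum over the differences in $\alpha$, in each $\tau_\sigma$, and in $\beta$, each term picks up at most $t{+}O(1)$ factors bounded by $(\rho(A)+\eta)$ in the quasi-extremal norm and exactly one factor of size $O(\varepsilon)$, so $\sum_t \gamma^t \cdot t\, (\rho(A)+\eta)^t \cdot O(\varepsilon)$ converges because $\gamma(\rho(A)+\eta) < 1$. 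This yields $d_\gamma(A,\hat A) \leq C\,\varepsilon$ with $C$ depending only on $\rho(A)$, $\gamma$, and the (dimension-dependent) constants relating the quasi-extremal norm to $\norm{\cdot}$.

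Where does irreducibility enter? The quasi-extremal norm of Theorem~\ref{thm:quasiextremal} is constructed for the \emph{fixed} automaton $A$, but the closed-form bound above must be evaluated on the \emph{difference} automaton $A-\hat A$, whose operators move as $\hat A$ varies. Without irreducibility, $\rho(\hat A)$ could jump discontinuously relative to a fixed comparison norm and the hidden constant could blow up as $\hat A \to A$ along bad directions. Irreducibility of $A$ is precisely the condition under which Wirth's local Lipschitz continuity of the joint spectral radius around irreducible sets (Section~\ref{sec:jsr}) applies, so that $\rho(\hat A) - \rho(A) = O(\varepsilon)$ and, more importantly, a single quasi-extremal norm chosen for $A$ remains quasi-extremal (with slack absorbed into $\delta$) for all $\hat A$ in a fixed neighbourhood. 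By Theorem~\ref{thm:irredwfa} irreducibility is a robust, intrinsic property of $A$, so the resulting constant $C$ depends only on $A$ and $\gamma$, as claimed. Chaining $d_\gamma(A,\hat A) \leq C\varepsilon \leq C\cdot O(\norm{H-\hat H}_2)$ finishes the proof.

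The main obstacle I anticipate is making the telescoping estimate in the second paragraph uniform in $\hat A$: one has to be careful that the norm used to measure the parameter perturbation, the quasi-extremal norm used to control the operator products, and the operator norm appearing in Lemma~\ref{lem:learn} are all comparable by constants that do not degrade as $\hat A \to A$. This is exactly the role played by Lemma~\ref{lem:boundd} and by fixing all norm choices in advance using only data attached to $A$ and $\gamma$; once that bookkeeping is set up, the summation is a routine geometric bound.
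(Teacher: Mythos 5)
Your proposal follows essentially the same route as the paper's proof: Lemma~\ref{lem:learn} to convert $\norm{H-\hat H}_2$ into parameter error, a quasi-extremal norm from Theorem~\ref{thm:quasiextremal}, Wirth's local Lipschitz continuity of the joint spectral radius at the irreducible set $\{\tau_\sigma\}$ to keep $\gamma\rho < 1$ for the perturbed operators, and then Lemma~\ref{lem:boundd} (whose telescoping estimate you partly re-derive) to conclude $d_\gamma(A,\hat A)\leq O(\norm{H-\hat H}_2)$ with constants depending only on $A$, $\gamma$, $B$, $|\Sigma|$. One small caveat: your motivating remark that without irreducibility $\rho(\hat A)$ ``could jump discontinuously'' overstates the issue, since $\rho$ is continuous on compact sets anyway (Barabanov); irreducibility is used, as in the paper, to get the quantitative Lipschitz control, not to prevent a discontinuity.
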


The local Lipschitz continuity of $\rho$ around irreducible sets of matrices plays an important role in the proof of this result (see Appendix~\ref{sec:applearning}).
Nonetheless, the irreducibility constraint is not a stringent one since the sets of irreducible matrices are known to be dense among compact sets of matrices with respect to the Hausdorff metric.

We conclude this section by comparing Theorem~\ref{thm:spectral} with analyses of spectral learning based on other error measures.
We start by noting that all finite-sample analyses of spectral learning for WFA we are aware of in the literature provide error bounds in terms of some finite variant of the $\ell_1$ distance. In particular, the analyses in \cite{hsu2012spectral,siddiqi10} bound $\sum_{x \in \Sigma^t} |f_A(x) - f_{\hat{A}}(x)|$ for a fixed $t \geq 0$, while the analyses in \cite{baillythesis,ballethesis,glaude2016pac} extend the bounds to $\sum_{x \in \Sigma^{\leq t}} |f_A(x) - f_{\hat{A}}(x)|$ for a fixed $t \geq 0$. This approach poses several drawbacks, including:
\begin{enumerate}
\item Finite $\ell_1$-norms provide a pseudo-metric between WFA whose kernel includes pairs of non-equivalent WFA.
\item The number of samples required to achieve a certain error increase with the horizon $t$, meaning that more data is required to get the same error on longer strings, and that existing bounds become vacuous in the case $t \to \infty$.
\end{enumerate}
In contrast, our result in terms of $d_\gamma$ establishes a bound on the discrepancy between $A$ and $\hat{A}$ on strings of arbitrary length and will never assign zero distance to a pair of automata realizing different functions. Furthermore, our bisimulation metric still makes sense outside the setting of spectral learning of probabilistic automata where most of the techniques mentioned above have been developed.

\section{Conclusion}

The metric developed in this paper was very much motivated and informed by
spectral ideas. Not surprisingly it was well suited for analyzing spectral
learning algorithms for weighted automata. Two obvious directions for future work are:
\begin{enumerate}
\item Approximation algorithms for the bisimulation metric.
\item Exploring the relation to approximate minimization.
\end{enumerate}
Both of these are well underway. It seems that some recent ideas from
non-linear optimization are very useful in developing approximation
algorithms and we hope to be able to report our results soon. Exploring
the relation to approximate minimization is less further along, but the
spectral ideas at the heart of the approximate minimization
algorithm in \cite{Balle15} should be well adapted to the techniques of the present paper.

\subparagraph*{Acknowledgements.}

We would like to thank Doina Precup who was actively involved in the
approximate minimization work.  This research has been supported by a grant
from NSERC (Canada).

\bibliography{prakash}

\appendix

\section{Norms, Seminorms, and Pseudometrics}\label{sec:appback}

Seminorms (resp.\ pseudometrics) are generalizations of norms (resp.\ metrics) often used in analysis. The key difference is that seminorms (resp.\ pseudometrics) are allowed to assign zero value to non-zero vectors (resp.\ zero distance to pairs of distinct vectors). This section recalls their definitions and main properties.

Given a finite-dimensional normed real vector space $(V,\norm{\cdot})$ we let $V^*$ denote the dual vector space equipped with the dual norm $\norm{w}_* = \sup_{\norm{v} \leq 1} w(v)$ for any $w \in V^*$. The induced norm of a linear operator $\tau : V \to V$ is defined as $\norm{\tau} = \sup_{\norm{v} \leq 1} \norm{\tau(v)}$. We recall that on a finite-dimensional vector space all norms are equivalent. Namely, given two norms $\norm{\cdot}$ and $\norm{\cdot}'$ on $V$ there exists a pair of constants $0 < c \leq C$ such that $c \norm{v} \leq \norm{v}' \leq C \norm{v}$ holds for all $v \in V$. It is immediate to check that the inequalities $C^{-1} \norm{w}_* \leq \norm{w}'_* \leq c^{-1} \norm{w}_*$ hold for the corresponding dual norms.

A seminorm $s$ on a vector space $V$ is a function $s : V \to \R$ satisfying the following two conditions:
\begin{enumerate}
\item (absolute homogeneity) $s(c v) = |c| s(v)$ for all $c \in \R$ and $v \in V$, and
\item (subadditivity) $s(u + v) \leq s(u) + s(v)$ for all $u, v \in V$.
\end{enumerate}
Jointly, these two conditions imply $s(v) \geq 0$ for all $v \in V$. Furthermore, the first condition implies $s(0) = 0$, but unlike in the case of norms we do not require that $0$ is the only vector with $s(v) = 0$. The kernel of a seminorm $s$ is defined as $\ker(s) = \{ v \in V : s(v) = 0\}$. Therefore, a seminorm $s$ is a norm if and only if $\ker(s) = \{0\}$. It can be readily verified that $\ker(s)$ is always a linear subspace of $V$.

A pseudometric on a set $V$ is a function $d : V \times V \to \R$ satisfying the following conditions:
\begin{enumerate}
\item (non-negativity) $d(v,w) \geq 0$ for all $v, w \in V$,
\item (indiscernibility of identicals) $d(v,v) = 0$ for all $v \in V$,
\item (symmetry) $d(v,w) = d(w,v)$ for all $v, w \in V$, and
\item (triangle inequality) $d(v,u) \leq d(v,w) + d(w,u)$ for all $u, v, w \in V$.
\end{enumerate}
Note that the only difference between a metric and a pseudometric is that in the latter case we do not require that $d(v,w) = 0$ implies $v = w$. Therefore, a pseudometric might not be able to distinguish between every pair of points in $V$. Seminorms provide a convenient way to build pseudometrics: if $V$ is a real vector space and $s : V \to \R$ is a seminorm on $V$, then $d(v,w) = s(v - w)$ is a pseudometric on $V$. We shall say that $d$ is the pseudometric induced by $s$.

\section{Proofs from Section~\ref{sec:jsr}}\label{sec:appjsr}

The following characterizations of reachability and observability will be used in the proof.

\begin{lemma}\label{lemma:obsreach}
Given a weighted automaton $A = \wa$ the following hold:
\begin{enumerate}
\item $A$ is observable if and only if $f_{A_v} \neq 0$ for all $v \in V \setminus \{0\}$.
\item $A$ is reachable if and only if $f_{A^w} \neq 0$ for all $w \in V^* \setminus \{0\}$.
\end{enumerate}
\end{lemma}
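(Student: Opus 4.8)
The plan is to reduce both statements to the definition of observability together with the already-noted equivalence that $u \sim_A v$ if and only if $f_{A_u} = f_{A_v}$, and then to obtain the reachability statement by applying the observability statement to the reverse automaton $\bar A$.

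First I would prove part~1. By definition $A$ is observable exactly when $W_A = \ker(\sim_A) = \{0\}$. Since the automaton $A_0$ realizes the zero function (because $f_{A_0}(x) = \beta(\tau_x(0)) = 0$ for every $x \in \sstar$ by linearity of $\tau_x$), the equivalence $u \sim_A v \iff f_{A_u} = f_{A_v}$ specialized to $v = 0$ gives $\ker(\sim_A) = \{ v \in V : f_{A_v} = 0 \}$. Hence $W_A = \{0\}$ holds if and only if $0$ is the only state realizing the zero function, i.e.\ if and only if $f_{A_v} \neq 0$ for all $v \in V \setminus \{0\}$, which is exactly part~1.

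For part~2, recall that $A$ is reachable by definition iff $\bar A = \war$ is observable, and that the state space of $\bar A$ is $V^*$. Applying part~1 to $\bar A$, reachability of $A$ is equivalent to $f_{\bar A_w} \neq 0$ for all $w \in V^* \setminus \{0\}$, where $\bar A_w$ denotes $\bar A$ with its initial vector $\beta$ replaced by $w$. It then remains to connect $f_{\bar A_w}$ with $f_{A^w}$. Comparing the defining tuples shows $\overline{A^w} = \bar A_w$: reversing $A^w = \langle \Sigma, V, \alpha, w, \{\tau_\sigma\}_\sigma\rangle$ produces $\langle \Sigma, V^*, w, \alpha, \{\tau_\sigma^\top\}_\sigma\rangle$, which is precisely $\bar A$ with initial vector $w$. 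Combining this with the already established identity $f_{\bar B}(x) = f_B(\bar x)$ (applied to $B = A^w$) yields $f_{\bar A_w}(x) = f_{A^w}(\bar x)$ for all $x \in \sstar$. Since string reversal is a bijection on $\sstar$, we get $f_{\bar A_w} = 0$ iff $f_{A^w} = 0$, and part~2 follows.

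The only place where care is needed is the bookkeeping of transposes and of which slot of the tuple each construction acts on, used to verify $\overline{A^w} = \bar A_w$; everything else is a direct unwinding of definitions. I do not anticipate a genuine obstacle here, since the lemma is essentially a restatement of the definitions of observability and reachability once the realization characterization of $\sim_A$ is in hand.
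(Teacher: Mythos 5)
Your proposal is correct and follows essentially the same route as the paper: part~1 is a direct unwinding of the definition of observability via the realization characterization of $\sim_A$ (the paper phrases it through linearity of the map $v \mapsto f_{A_v}$, which amounts to the same thing), and part~2 is obtained, exactly as in the paper, by applying part~1 to the reverse automaton $\bar{A}$; you merely spell out the identification $\overline{A^w} = \bar{A}_w$ that the paper leaves implicit.
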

\begin{proof}
To prove the first claim we note that if $A$ is not observable then there exist two different states $u, v \in V$ such that $f_{A_u} = f_{A_v}$. Therefore, we see that $w = u - v \neq 0$ and $A_w$ computes the function $f_{A_w} = f_{A_u} - f_{A_v} = 0$. On the other hand, if $v \in V \setminus \{0\}$ is such that $f_{A_v} = 0$, then $A_v$ and $A_0$ compute the zero function and $A$ is not observable.

The second claims follows from applying the first claim to the reverse automaton $\bar{A}$.
\end{proof}

\begin{proof}[Proof of Theorem~\ref{thm:irredwfa}]
To prove the ``only if'' part assume that the set of linear maps $M = \{\tau_\sigma\}_{\sigma \in \Sigma}$ is reducible. Then there exists a non-trivial subspace $W \subset V$ that is left invariant by all the $\tau_\sigma$. Using this subspace we can find a non-zero vector $v \in W$ and a non-zero linear form $w \in V^*$ such that $W \subseteq \ker(w)$. We claim that $A' = A_v^w$ is not minimal. Indeed, since $W$ is invariant by every $\tau_\sigma$ we have $\tau_x(v) \in W$ for all $x \in \sstar$, which implies $f_{A'}(x) = w(\tau_x(v)) = 0$ for all $x \in \sstar$. Therefore we have $f_{A'} = 0$ which is also computed by the weighted automaton $A_0^w$ with initial weights $0 \in V$, so $A'$ is not observable.

For the ``if'' part we assume that $A_v^w$ is not minimal for some $v \in V \setminus \{0\}$ and $w \in V^* \setminus \{0\}$. Since $A$ is irreducible if and only if $\bar{A}$ is irreducible, we can assume without loss of generality that $A_v^w$ is not observable. Furthermore, by Lemma~\ref{lemma:obsreach} we can further assume that (replacing $v$ by a different state if necessary) $A_v^w$ computes the zero function. Now let us take the subspace $W = \spn \{ \tau_x(v) : x \in \sstar \} \subseteq V$ and show that it is a witness for the reducibility of $M$. Note that by construction we immediately have $\tau_\sigma(W) \subseteq W$ for any $\sigma \in \Sigma$, so we only need to check that $W$ is not trivial. On the one hand we have $0 \neq v \in W$, so $\dim(W) \geq 1$. On the other hand, since $A_v^w$ computes the zero function we must have $W \subseteq \ker(w)$, which implies $\dim(W) \leq \dim(\ker(w)) = n - 1$ since $w$ is not zero.
\end{proof}

\section{Proofs from Section~\ref{sec:metrics}}\label{sec:appmetrics}

\begin{proof}[Proof of Theorem~\ref{thm:uniquefp}]
For simplicity, let $F = F_{A,\gamma}$. By the assumption on $\gamma$ there exists some $\delta > 0$ such that $\gamma \leq 1/(\rho(A) + \delta)$. Now take $M = \{\tau_\sigma\}_{\sigma \in \Sigma}$ and $\eta = \delta / 2$ and let $\norm{\cdot}$ be the corresponding quasi-extremal norm on $V$ obtained from Theorem~\ref{thm:quasiextremal}. Using this norm we can endow $\cS$ with the metric given by $d(s,s') = \sup_{\norm{v} \leq 1} |s(v) - s'(v)|$ to obtain a complete metric space $(\cS,d)$. Thus, if we show that $F$ is a contraction on $\cS$ with respect to this metric, then by Banach's fixed point theorem $F$ has a unique fixed point. To see that $F$ is indeed a contraction we start by observing that:
\begin{equation}
\label{eqn:contraction1a}
d(F(s),F(s')) = \sup_{\norm{v} \leq 1} |F(s)(v) - F(s')(v)| = \gamma \sup_{\norm{v} \leq 1} \left|\max_{\sigma} s(\tau_\sigma(v)) - \max_{\sigma'} s'(\tau_{\sigma'}(v))\right| \enspace.
\end{equation}
Fix any $v \in V$ with $\norm{v} \leq 1$ and suppose without loss of generality (otherwise we exchange $s$ and $s'$) that $\max_{\sigma} s(\tau_\sigma(v)) \geq \max_{\sigma'} s'(\tau_{\sigma'}(v))$. Then, using the absolute homogeneity of $s$ and $s'$, it can be shown that:
\begin{align}
\left|\max_{\sigma} s(\tau_\sigma(v)) - \max_{\sigma'} s'(\tau_{\sigma'}(v))\right| &=
\max_{\sigma} s(\tau_\sigma(v)) - \max_{\sigma'} s'(\tau_{\sigma'}(v)) \notag \\
&= s(\tau_{\sigma_*}(v)) - \max_{\sigma'} s'(\tau_{\sigma'}(v)) \notag \\
&\leq s(\tau_{\sigma_*}(v)) - s'(\tau_{\sigma_*}(v)) \notag \\
&= \norm{\tau_{\sigma_*}(v)} \left(s\left(\frac{\tau_{\sigma_*}(v)}{\norm{\tau_{\sigma_*}(v)}}\right) - s'\left(\frac{\tau_{\sigma_*}(v)}{\norm{\tau_{\sigma_*}(v)}}\right)\right) \notag \\
&\leq \norm{\tau_{\sigma_*}(v)} \sup_{\norm{v'} \leq 1} | s(v') - s'(v') | \notag \\
&= \norm{\tau_{\sigma_*}(v)} d(s,s') \enspace.\label{eqn:contraction2a}
\end{align}
We refer the reader to the appendix for a full derivation. Finally, we use the definition of $\norm{\cdot}$ and the choices of $\delta$ and $\eta$ to see that
\begin{align*}
\gamma \norm{\tau_{\sigma_*}(v)} \leq \gamma (\rho(A) + \eta) \norm{v}
\leq \frac{\rho(A) + \delta/2}{\rho(A) + \delta} < 1 \enspace,
\end{align*}
from which we conclude by combining \eqref{eqn:contraction1a} with \eqref{eqn:contraction2a} that $d(F(s),F(s')) < d(s,s')$.
\end{proof}

\begin{proof}[Proof of Theorem~\ref{thm:closedform}]
For simplicity, let $F=F_{A,\gamma}$ and $s=s_{A,\gamma}$. In the first place we note that $s$ clearly satisfies the seminorm axioms. However, this is not enough to guarantee that $s$ is a seminorm because the supremum over $\sinfty$ could be unbounded while the definition of seminorm requires the image by $s$ of every element in $V$ to be in $\R$. To guarantee that $s$ is a seminorm we must show that $s(v)$ is always finite. Let $\norm{\cdot}$ be the norm on $V$ constructed in the proof of Theorem~\ref{thm:uniquefp}. Then we can use H\"older's inequality and the submultiplicativity of induced norms to show that for any $v \in V$ and $x \in \sstar$ we have
\begin{equation*}
|\beta(\tau_x(v))| \leq \norm{\tau_x(v)} \norm{\beta}_* \leq (\rho(A) + \eta)^{|x|} \norm{v} \norm{\beta}_* \enspace,
\end{equation*}
where $\eta = \delta / 2$ for some $\delta > 0$ such that $\gamma \leq 1 / (\rho(A) + \delta)$. Thus, for any $v \in V$ we can bound the expression in \eqref{eqn:closeform} as
\begin{align*}
s(v)
\leq \norm{v} \norm{\beta}_* \sum_{t = 0}^\infty \gamma^t (\rho(A) + \eta)^t
\leq \norm{v} \norm{\beta}_* \sum_{t=0}^\infty \left(\frac{\rho(A) + \delta/2}{\rho(A) + \delta}\right)^t
< \infty \enspace.
\end{align*}
Now that we know that $s$ is a seminorm and $F$ has a unique fixed point in $\cS$, we only need to verify that the expression in \eqref{eqn:closeform} is a fixed point of $F$. To see that this is the case we just note the following holds for any $v\in V$:
\begin{align*}
F(s)(v) &= |\beta(v)| + \gamma \max_{\sigma \in \Sigma} |s(\tau_{\sigma}(v))| \\
&= |\beta(v)| + \gamma \max_{\sigma \in \Sigma} \left| \sup_{x \in \sinfty} \sum_{t=0}^\infty \gamma^t |\beta(\tau_{x_{\leq t}}(\tau_\sigma(v)))| \right| \\
&=  |\beta(v)| + \max_{\sigma \in \Sigma} \sup_{x \in \sinfty} \sum_{t=0}^\infty \gamma^{t+1} |\beta(\tau_{(\sigma x)_{\leq t+1}}(v))| \\
&= |\beta(v)| + \sup_{x \in \sinfty} \sum_{t=1}^\infty \gamma^{t} |\beta(\tau_{x_{\leq t}}(v))| \\
&= s(v) \enspace.
\end{align*}
Finally, note that the second equality follows from the identity $|\beta(\tau_y(v))| = f_{A_v}(y)$ for all $y \in \sstar$.
\end{proof}

\begin{proof}[Proof of Theorem ~\ref{thm:kers}]
For simplicity, let $F = F_{A,\gamma}$ and $s = s_{A,\gamma}$. 
Since $W_A = \ker(\sim_A)$ is the largest bisimulation for $A$, it suffices to show that $\ker(s)$ is a bisimulation for $A$ with $W_A \subseteq \ker(s)$. For the first property we recall that $\ker(s)$ is a linear subspace of $V$ and note that for any $v\in \ker(s)$ we have, using Theorem~\ref{thm:closedform},
\begin{align*}
0 = s(v) = |\beta(v)| + \sup_{x \in \sinfty} \sum_{t=1}^\infty \gamma^t |\beta(\tau_{x_{\leq t}}(v))| \geq |\beta(v)| \geq 0 \enspace.
\end{align*}
Therefore $\ker(s) \subseteq \ker(\beta)$. To verify the invariance of $\ker(s)$ under all $\tau_\sigma$ let $v \in \ker(s)$ and note that using $\beta(v) = 0$ we can write
\begin{align*}
0 \leq s(\tau_\sigma(v)) &=
\sup_{x \in \sinfty} \sum_{t=0}^\infty \gamma^t |\beta(\tau_{x_{\leq t}}(\tau_\sigma(v)))| \\
&= \sup_{x \in \sinfty} \sum_{t=0}^\infty \gamma^t |\beta(\tau_{(\sigma x)_{\leq t+1}}(v))| \\
&= \frac{1}{\gamma} \sup_{x \in \sinfty} \sum_{t=0}^\infty \gamma^{t+1} |\beta(\tau_{(\sigma x)_{\leq t+1}}(v))| \\
&\leq \frac{1}{\gamma} \sup_{x \in \sinfty} \sum_{t=1}^\infty \gamma^{t} |\beta(\tau_{x_{\leq t}}(v))| \\
&= \frac{1}{\gamma} \left(|\beta(v)| + \sup_{x \in \sinfty} \sum_{t=1}^\infty \gamma^{t} |\beta(\tau_{x_{\leq t}}(v))|\right) \\
&= \frac{1}{\gamma} s(v) = 0 \enspace.
\end{align*}
This implies $\tau_{\sigma}(v) \in \ker(s)$ for all $v \in \ker(s)$ and $\sigma \in \Sigma$. Therefore $\ker(s)$ is a bisimulation for $A$.

Now let $v \in W_A$. Since $W_A$ is contained in the kernel of $\beta$ and is invariant for all $\tau_\sigma$, we see that $\beta(\tau_x(v)) = 0$ for all $x \in \sstar$. Therefore, using the expression for $s$ given in Theorem~\ref{thm:closedform} we obtain $s(v) = 0$. This concludes the proof.
\end{proof}

\section{Proofs from Section~\ref{sec:paramcont}}\label{sec:appparamcont}

We first state an elementary lemma that we need in order to prove an upper
bound on $d_{\gamma}$.  This also played an important role in the
application of our bismulation pseudometric to spectral learning presented
in Section~\ref{sec:learning}.

\begin{lemma}\label{lem:recurrence}
Let $(s_l)_{l \in \N}$ be a sequence such that there exists a constant $a$
and a sequence $(b_l)_{l \in \N}$ satisfying $s_{l+1} \leq a s_{l} + b_{l}$
for all $l \geq 0$. Then for all $l \geq 0$ we have $s_{l+1} \leq a^{l+1}
s_0 + \sum_{i=0}^{l} a^{l-i} b_i$. 
\end{lemma}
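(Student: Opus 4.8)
The plan is to prove this by straightforward induction on $l$, unwinding the one-step recurrence $s_{l+1} \leq a s_l + b_l$ into the claimed closed form. The only hypothesis I will need beyond what is literally written is that $a \geq 0$ (which holds in all our intended applications, since there $a$ is a product of a discount factor with a spectral-radius-type quantity); this is what allows multiplying an inequality of the form $s_l \leq (\text{stuff})$ through by $a$ without reversing it. I would flag this assumption explicitly at the start.

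For the base case $l = 0$, the hypothesis with $l = 0$ gives directly $s_1 \leq a s_0 + b_0$, and the right-hand side of the claimed inequality at $l = 0$ is $a^{1} s_0 + \sum_{i=0}^{0} a^{0-i} b_i = a s_0 + b_0$, so the two agree.

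For the inductive step, suppose the claim holds for some $l \geq 0$, i.e.\ $s_{l+1} \leq a^{l+1} s_0 + \sum_{i=0}^{l} a^{l-i} b_i$. Applying the recurrence once more and then the induction hypothesis,
\begin{align*}
s_{l+2} &\leq a s_{l+1} + b_{l+1}
\leq a\left(a^{l+1} s_0 + \sum_{i=0}^{l} a^{l-i} b_i\right) + b_{l+1} \\
&= a^{l+2} s_0 + \sum_{i=0}^{l} a^{l+1-i} b_i + b_{l+1}
= a^{l+2} s_0 + \sum_{i=0}^{l+1} a^{(l+1)-i} b_i \enspace,
\end{align*}
which is exactly the claim at index $l+1$, since the extra term $b_{l+1}$ is the $i = l+1$ summand $a^{(l+1)-(l+1)} b_{l+1}$. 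This closes the induction.

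There is essentially no substantive obstacle here; the one point requiring a moment's care is the sign condition on $a$ in the inductive step (to preserve the inequality direction when multiplying by $a$), and the reindexing of the sum when absorbing $b_{l+1}$. Both are routine, so the proof is short.
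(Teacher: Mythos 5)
Your proof is correct and follows exactly the route the paper intends: the paper's proof is simply ``Simple proof by induction on $l$,'' and your induction (base case from the recurrence at $l=0$, inductive step by applying the recurrence once and reindexing the sum) is that argument spelled out. Your explicit flagging of the implicit hypothesis $a \geq 0$ (needed to multiply the inductive inequality by $a$ without reversing it, and satisfied in the application since $a = \gamma\theta \geq 0$) is a fair and slightly more careful reading than the paper's statement.
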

\begin{proof}
Simple proof by induction on $l$.
\end{proof}
\begin{lemma}\label{lem:boundd}
Let $A = \wa$ and $A' = \wap$ be two weighted automata over the same
alphabet $\Sigma$ and the same vector space $V$. Let $M = \{\tau_\sigma\}
\cup \{\tau_\sigma'\}$ and $\rho = \rho(M)$. Suppose $\gamma < 1/\rho$ and
$\norm{\cdot}$ is a norm on $V$ such that for all $\sigma \in \Sigma$ we
have $\norm{\tau_\sigma}, \norm{\tau_\sigma'} \leq \theta$ for some
$\theta$ such that $\nu = \gamma \theta < 1$. Then we have the following: 
\begin{equation}\label{eqn:dAA}
d_\gamma(A,A') \leq  \frac{\norm{\alpha} \norm{\beta - \beta'}_*  +
  \norm{\beta'}_* \norm{\alpha - \alpha'}}{1 - \nu} + \frac{\gamma
  \norm{\alpha} \norm{\beta'}_* \max_{\sigma}\norm{\tau_\sigma -
    \tau_\sigma'}}{(1 - \nu)^2} 
\enspace.
\end{equation}
\end{lemma}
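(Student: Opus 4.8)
The plan is to reduce everything to a uniform per-string estimate and then sum a geometric-type series. First I would invoke Corollary~\ref{cor:closedformd} to write
\[
d_\gamma(A,A') = \sup_{x \in \sinfty} \sum_{t=0}^\infty \gamma^t \bigl| f_A(x_{\leq t}) - f_{A'}(x_{\leq t}) \bigr| \enspace,
\]
which applies because $\gamma < 1/\rho \leq 1/\max\{\rho(A),\rho(A')\} = 1/\rho(A-A')$. It then suffices to bound $|f_A(y) - f_{A'}(y)| = |\beta(\tau_y(\alpha)) - \beta'(\tau'_y(\alpha'))|$ by a quantity depending only on $|y|$, since the supremum over $x$ will be harmless once the bound is uniform. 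I would telescope over the three sources of discrepancy, obtaining
\[
f_A(y) - f_{A'}(y) = (\beta - \beta')(\tau_y\alpha) + \beta'\bigl((\tau_y - \tau'_y)\alpha\bigr) + \beta'\bigl(\tau'_y(\alpha - \alpha')\bigr) \enspace,
\]
being careful to order the telescoping so that it is the \emph{same} $\alpha$ (not $\alpha'$) that is paired with $\tau_y$ and with $\tau_y - \tau'_y$; this is what makes $\norm{\alpha}$ rather than $\norm{\alpha'}$ appear in the final bound. By H\"older's inequality and submultiplicativity of induced norms, together with $\norm{\tau_\sigma}, \norm{\tau'_\sigma} \leq \theta$, the first and third terms are at most $\theta^{|y|} \norm{\alpha} \norm{\beta - \beta'}_*$ and $\theta^{|y|} \norm{\beta'}_* \norm{\alpha - \alpha'}$ respectively.

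The main work is the middle term, for which I need an estimate on $\norm{\tau_y - \tau'_y}$ as a function of $|y|$. Writing $D_l = \sup_{|y| = l} \norm{\tau_y - \tau'_y}$ and $\Delta = \max_\sigma \norm{\tau_\sigma - \tau'_\sigma}$, the one-step identity $\tau_{y\sigma} - \tau'_{y\sigma} = \tau_\sigma(\tau_y - \tau'_y) + (\tau_\sigma - \tau'_\sigma)\tau'_y$ together with $\norm{\tau'_y} \leq \theta^{|y|}$ yields the recurrence $D_{l+1} \leq \theta D_l + \theta^l \Delta$ with $D_0 = 0$. Applying Lemma~\ref{lem:recurrence} with $a = \theta$ and $b_l = \theta^l \Delta$ then gives $D_l \leq l\, \theta^{l-1} \Delta$, so the middle term above is bounded by $\norm{\beta'}_* \, |y| \, \theta^{|y|-1} \, \norm{\alpha}\, \Delta$.

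Combining the three bounds, for any $x \in \sinfty$ and $t = |x_{\leq t}|$ the summand is at most $\gamma^t \theta^t \bigl(\norm{\alpha}\norm{\beta-\beta'}_* + \norm{\beta'}_*\norm{\alpha-\alpha'}\bigr) + \gamma^t t\, \theta^{t-1}\, \norm{\alpha}\norm{\beta'}_*\Delta$. Since $\nu = \gamma\theta < 1$, summing over $t \geq 0$ uses $\sum_{t \geq 0} \nu^t = (1-\nu)^{-1}$ for the first part and, writing $\gamma^t\theta^{t-1} = \gamma\nu^{t-1}$, the arithmetico-geometric sum $\sum_{t \geq 1} t\gamma^t\theta^{t-1} = \gamma \sum_{t \geq 1} t\nu^{t-1} = \gamma(1-\nu)^{-2}$ for the second; taking the (now trivial) supremum over $x$ and reading off coefficients gives exactly \eqref{eqn:dAA}. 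The only genuinely delicate points are keeping the telescoping direction consistent with the convention for extending $\tau_\sigma$ to strings and correctly identifying the $(1-\nu)^{-2}$ sum; the rest is bookkeeping.
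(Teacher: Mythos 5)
Your proof is correct and reproduces the bound \eqref{eqn:dAA} exactly, with the right pairing of constants ($\norm{\alpha}$ with both the $\beta$-error and the $\tau$-error, $\norm{\beta'}_*$ with $\norm{\alpha-\alpha'}$), and your appeal to Corollary~\ref{cor:closedformd} is justified since $\rho(M) \geq \max\{\rho(A),\rho(A')\} = \rho(A-A')$. The route differs from the paper's in the decomposition: the paper splits each term $|f_A(x_{\leq t}) - f_{A'}(x_{\leq t})|$ into only two pieces, $\norm{\beta-\beta'}_*\norm{\tau_{x_{\leq t}}(\alpha)}$ plus $\norm{\beta'}_*\varepsilon_t$ with $\varepsilon_t = \norm{\tau_{x_{\leq t}}(\alpha)-\tau'_{x_{\leq t}}(\alpha')}$, derives the one-step recurrence $\varepsilon_{t+1} \leq \theta\varepsilon_t + \varepsilon_\tau\theta^t\norm{\alpha}$, and then applies Lemma~\ref{lem:recurrence} to the discounted partial sums $s_l = \sum_{t\leq l}\gamma^t\varepsilon_t$, evaluating the finite sums exactly with a remainder $R_l = O(l)$ before letting $l \to \infty$. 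You instead telescope the scalar difference into three terms ($\beta$-error, operator error, $\alpha$-error), use the recurrence lemma only to get the operator estimate $\norm{\tau_y-\tau'_y} \leq |y|\,\theta^{|y|-1}\max_\sigma\norm{\tau_\sigma-\tau'_\sigma}$, and sum the geometric and arithmetico-geometric series directly. Both arguments rest on the same ingredients (the closed form of $d_\gamma$, H\"older, submultiplicativity, $\nu=\gamma\theta<1$); yours buys a shorter computation with no nested recurrence and no remainder/limit bookkeeping, and your explicit care in pairing $\tau_y-\tau'_y$ with $\alpha$ rather than $\alpha'$ makes transparent why $\norm{\alpha}$ (and not $\norm{\alpha'}$) appears in the bound, a step the paper's chain of inequalities passes over quickly when it bounds $\norm{\tau'_y(\alpha')}$ by $\theta^t\norm{\alpha}$. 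What the paper's version buys in exchange is an explicit trajectory-level bound on the state discrepancy $\varepsilon_t$, which can be of independent use, but for the stated lemma your argument is complete.
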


\begin{proof}

Fix $x \in \Sigma^\infty$ and given $l \geq 0$ define $D_l(x) = \sum_{t=0}^l\gamma^t |f_A(x_{\leq t}) - f_{A'}(x_{\leq t})|$.
By applying the triangle and H\"{o}lder inequalities to any term in the summation $D_l(x)$ we get 
\begin{equation}\label{eqn:absf}
|f_A(x_{\leq t}) - f_{A'}(x_{\leq t})| \leq \norm{\beta-\beta'}_*\norm{\tau_{x_{\leq t}}(\alpha)}
+\norm{\beta'}_*\norm{\tau_{x_{\leq t}}(\alpha)-\tau_{x_{\leq t}}'(\alpha')}
\enspace.
\end{equation}
Using the assumption on $\norm{\cdot}$ we can see that $\norm{\tau_{x_{\leq t}}(\alpha)}\leq \theta^t \norm{\alpha}$ for any $t \geq 0$. Now let $\varepsilon_\beta = \norm{\beta - \beta'}_*$ and $\varepsilon_t = \norm{\tau_{x_{\leq t}}(\alpha)-\tau_{x_{\leq t}}'(\alpha')}$. Plugging these definitions and the bound \eqref{eqn:absf} in $D_l$ we get
\begin{equation}
\label{eqn:Dl}
D_l(x) \leq \varepsilon_\beta \left(\sum_{t=0}^l \gamma^t \theta^t \right)+ \norm{\beta'}_* \left(\sum_{t=0}^l \gamma^t \varepsilon_t \right)
\enspace.
\end{equation}
Now we shall bound the term $s_l = \sum_{t=0}^l \gamma^t \varepsilon_t$. Suppose $x_{\leq t+1} = y \sigma$, where $y \in \Sigma^t$ and $\sigma \in \Sigma$. Let $\varepsilon_\tau = \max_{\sigma} \norm{\tau_\sigma - \tau_\sigma'}$. Using the triangle inequality we can show the following:
\begin{align*}
\varepsilon_{t+1}
&=
\norm{\tau_{y \sigma}(\alpha)-\tau_{y \sigma}'(\alpha')} \\
&=
\norm{\tau_{\sigma}(\tau_y(\alpha))- \tau_\sigma'(\tau_y'(\alpha'))} \\
&\leq 
\norm{\tau_\sigma(\tau_{y}'(\alpha')) - \tau_\sigma'(\tau_{y}'(\alpha'))} +
\norm{\tau_\sigma(\tau_{y}(\alpha)-\tau_{y}'(\alpha'))} \\
&\leq
\norm{\tau_\sigma - \tau_\sigma'} \norm{\tau_{y}'(\alpha')} +
\norm{\tau_\sigma} \norm{\tau_{y}(\alpha)-\tau_{y}'(\alpha')} \\
&\leq
\varepsilon_\tau \theta^t \norm{\alpha} + \theta \varepsilon_t
\enspace.
\end{align*}
We will now use the inequality above to show that $s_l$ satisfies a recurrence of the form considered in Lemma~\ref{lem:recurrence} for all $l \geq 0$:
\begin{align*}
s_{l+1} &= \varepsilon_0 +
\sum_{t=1}^{l+1} \gamma^t \varepsilon_t \\
&=
\varepsilon_0 +
\gamma \sum_{t=0}^l \gamma^t \varepsilon_{t+1} \\
&\leq
\varepsilon_0 +
\gamma \sum_{t=0}^l \gamma^t \left(\varepsilon_\tau \theta^t \norm{\alpha} + \theta \varepsilon_t \right) \\
&=
\gamma \theta s_l +
\varepsilon_0 + \gamma \varepsilon_\tau \norm{\alpha} \sum_{t=0}^l (\gamma \theta)^t \enspace.
\end{align*}
Let $\varepsilon_\alpha = \norm{\alpha - \alpha'}$ and note that $s_0 = \varepsilon_0 = \varepsilon_\alpha$. Thus, applying Lemma~\ref{lem:recurrence} with $a = \gamma \theta$ and $b_l = \varepsilon_\alpha + \gamma \varepsilon_\tau \norm{\alpha} \sum_{t=0}^l (\gamma \theta)^t$ to the sequence $s_l$ we get:
\begin{align*}
s_l &\leq
(\gamma \theta )^l \varepsilon_\alpha
+ \sum_{i=0}^{l-1} (\gamma \theta)^{l-1-i} \left(\varepsilon_\alpha + \gamma \varepsilon_\tau \norm{\alpha} \sum_{t=0}^i (\gamma \theta)^t \right) \\
&=
\varepsilon_\alpha \sum_{t=0}^l (\gamma \theta)^t
+ \gamma \varepsilon_\tau \norm{\alpha} \sum_{i=0}^{l-1} \left((\gamma \theta)^{l-1-i} \sum_{t=0}^i (\gamma \theta)^t\right) \\
&=
\varepsilon_\alpha \frac{1 - (\gamma \theta)^{l+1}}{1 - \gamma \theta} + \frac{\gamma \varepsilon_\tau \norm{\alpha}}{1 - \gamma \theta} \sum_{i=0}^{l-1} \left( (\gamma \theta )^{l-1-i} - (\gamma \theta )^l \right) \\
&=
\varepsilon_\alpha \frac{1 - (\gamma \theta)^{l+1}}{1 - \gamma \theta } + \frac{\gamma  \varepsilon_\tau \norm{\alpha}}{1 - \gamma \theta} \left( \frac{1 - (\gamma \theta )^{l}}{1 - \gamma \theta } - l (\gamma \theta )^l \right) \\
&=
\frac{\varepsilon_\alpha}{1 - \gamma \theta } + \frac{\gamma \varepsilon_\tau \norm{\alpha}}{(1 - \gamma \theta)^2} - (\gamma \theta)^l \left( \frac{\varepsilon_\alpha \gamma \theta  + l \gamma \varepsilon_\tau \norm{\alpha}}{1 - \gamma \theta} + \frac{\gamma \varepsilon_\tau \norm{\alpha} }{(1- \gamma \theta)^2} \right) \enspace.
\end{align*}
Plugging this bound into \eqref{eqn:Dl} and grouping the terms multiplied by $(\gamma \theta)^l$ into $R_l$ we get
\begin{equation}
D_l(x) \leq \frac{\varepsilon_\beta \norm{\alpha} + \varepsilon_\alpha \norm{\beta'}_*}{1 - \gamma \theta} 
+ \frac{\gamma \varepsilon_\tau \norm{\alpha} \norm{\beta'}_*}{(1 - \gamma \theta)^2} - (\gamma \theta)^l R_l \enspace.
\end{equation}
Finally, observing that $R_l = O(l)$ and using that $\gamma \theta = \nu < 1$, we take the limit $l \to \infty$ and obtain the desired bound using the closed form expression for $d_\gamma(A,A')$ given in Corollary~\ref{cor:closedformd}.
\end{proof}

Now we proceed to the proof of Theorem~\ref{thm:parcont}. The main
ingredient of this proof is the construction of a norm on $V$ satisfying
the conditions of Lemma~\ref{lem:boundd} uniformly for all $A_i$ with $i
\geq j_0$ for some $j_0 \in \N$. 

\begin{proof}[Proof of Theorem~\ref{thm:parcont}]
Let $A_i = \wan$ be a sequence of weighted automata converging to $A = \wa$
with respect to some norm $\norm{\cdot}$ on $V$ and suppose $\gamma <
1/\rho(A)$. For any $j \in \N$ we define the set 
\begin{equation*}
M_j = \{\tau_\sigma\}_{\sigma \in \Sigma} \cup \bigcup_{i \geq j}
\{\tau_{i,\sigma}\}_{\sigma \in \Sigma} \enspace. 
\end{equation*}
Since $\lim_{i \to \infty} \tau_{i,\sigma} = \tau_{\sigma}$ for all $\sigma
\in \Sigma$, the set $M_j$ is bounded for all $j \in \N$. Let $\rho_j =
\rho(M_j) = \rho(\bar{M}_j)$, where $\bar{M}_j$ is the compact set obtained
as the closure of $M_j$. Using the continuity of the joint spectral radius
on compact sets of operators we see that $\lim_{j \to \infty} \rho_j =
\rho(A)$. Thus, letting $\delta = 1 - \gamma \rho(A) > 0$, there exists a
constant $j_0 \in \N$ such that $|\rho_j - \rho(A)| < \delta /(4 \gamma)$
is satisfied for all $j \geq j_0$. Now we can apply
Theorem~\ref{thm:quasiextremal} to $\bar{M}_{j_0}$ with $\eta = \delta / (4
\gamma)$ to find a norm $\norm{\cdot}'$ on $V$ such that
$\norm{\tau_\sigma}' \leq \rho(A) + \delta / (2 \gamma)$ and
$\norm{\tau_{i,\sigma}}' \leq \rho(A) + \delta / (2 \gamma)$ for all
$\sigma \in \Sigma$ and all $i \geq j_0$. Taking $\theta = \rho(A) + \delta
/ (2 \gamma)$ we see that $\gamma \theta = \gamma \rho(A) + \delta / 2 <
\gamma \rho(A) + \delta = 1$. Hence, we are under the hypotheses of
Lemma~\ref{lem:boundd} and we have that the following holds for all $i \geq
j_0$:
\begin{equation}\label{eqn:dAAi}
d_\gamma(A,A_i) \leq  \frac{\norm{\alpha}' \norm{\beta - \beta_i}'_{*}  +
  \norm{\beta_i}'_{*} \norm{\alpha - \alpha_i}'}{1 - \nu} + \frac{\gamma
  \norm{\alpha}' \norm{\beta_i}'_{*} \max_{\sigma}\norm{\tau_\sigma -
    \tau_{i,\sigma}}'}{(1 - \nu)^2} 
\enspace,
\end{equation}
where $\nu = \gamma \theta = \gamma \rho(A) + \delta / 2$.

Now recall that all norms in a finite dimensional vector space are
equivalent. Therefore, we can find a pair constants $0 < c \leq C$ such
that $c \norm{v} \leq \norm{v}' \leq C \norm{v}$ holds for all $v \in V$
and $C^{-1} \norm{w}_* \leq \norm{w}'_* \leq c^{-1} \norm{w}_*$ for all $w
\in V^*$. Plugging these inequalities in \eqref{eqn:dAAi} we see that for
all $i \geq j_0$ we have 
\begin{equation*}
d_\gamma(A,A_i) \leq  \frac{C (\norm{\alpha} \norm{\beta - \beta_i}_{*}  +
  \norm{\beta_i}_{*} \norm{\alpha - \alpha_i})}{c (1 - \nu)} + \frac{C^2
  \gamma \norm{\alpha} \norm{\beta_i}_{*} \max_{\sigma} \norm{\tau_\sigma -
    \tau_{i,\sigma}}}{c (1 - \nu)^2} 
\enspace.
\end{equation*}
Since the sequence of automata $(A_i)$ converges to $A$ with respect to
$\norm{\cdot}$, we conclude that $\lim_{i \to \infty} d_{\gamma}(A,A_i) =
0$. 
\end{proof}

\section{Proofs from Section~\ref{sec:inputcont}}\label{sec:appinputcont}

\begin{proof}[Proof of Theorem ~\ref{thm:gcontinuous}]
Let $A$ be weighted automaton such that $\gamma < 1/\rho(A)$ and let $(A_i)_{i \in \N}$ be a sequence of weighted automata converging to $A$ with respect to $d_\gamma$. Note that for any $i \in \N$ we have the following:
\begin{align*}
\sup_{x \in \sstar} \frac{ | f_A(x) - f_{A_i}(x) |}{g(|x|)}
= \sup_{x \in \sstar} \frac{ | f_A(x) - f_{A_i}(x) | \gamma^{|x|}}{g(|x|)\gamma^{|x|}}
\leq \sup_{x \in \sstar} \frac{d_\gamma(A,A_i)}{g(|x|)\gamma^{|x|}}
= \sup_{l \in \N} \frac{d_\gamma(A,A_i)}{g(l) \gamma^{l}} \enspace.
\end{align*}
Now note that $g(l) > 0$ and $g(l) = \Omega(\gamma^{-l})$ implies $\inf_{l \in \N} g(l) \gamma^{l} > 0$. Using the assumption that $\lim_{i \to \infty} d_\gamma(A,A_i) = 0$ we now see that \eqref{eqn:inpcon} is satisfied.
\end{proof}

\begin{proof}[Proof of Theorem ~\ref{thm:notgcontinuous}]
Let $\Sigma = \{a\}$ be an alphabet with one symbol and let $A_i = \waai$ with $\tau_i = 1 + 2^{-i}$ and $\alpha = \beta = 1$ be the weighted automaton shown on the left of Figure~\ref{fig:WAinputcts}, and let $A = \waa$ with $\tau = 1$ be the weighted automaton shown on the right of Figure~\ref{fig:WAinputcts}. For any $i > \log_2(\gamma/(1-\gamma))$ we have $\gamma \tau_i < 1$. Hence, we can write
\begin{align*}
d_\gamma(A,A_i)
=\sup_{x \in \sinfty} \sum_{t \geq 0} \gamma^t | \tau^t - \tau_{i}^t |
=\sum_{t \geq 0} \gamma^t \left( (1+2^{-i})^{t} - 1 \right)
= \frac{1}{1-\gamma (1+2^{-i})} - \frac{1}{1- \gamma}
\enspace.
\end{align*}
Therefore we see that $\lim_{i \to \infty} d_\gamma(A,A_i) = 0$. Now let us show that for these automata the limit in \eqref{eqn:inpcon} is not zero for any $g(l) = c^{o(l)}$ with $c > 1$. Indeed, we can write
\begin{align*}
\sup_{x \in \sstar} \frac{|f_A(x) - f_{A_i}(x)|}{g(|x|)} &=
\sup_{x \in \sstar} \frac{(1 + 2^{-i})^{|x|} - 1}{c^{o(|x|)}} =
\sup_{l \in \N} \frac{(1 + 2^{-i})^{l} - 1}{c^{o(l)}} \\
&\geq
\sup_{l \in \N} \frac{(1 + 2^{-i})^{l}}{c^{o(l)}} - \sup_{l \in \N} \frac{1}{c^{o(l)}} = \infty \enspace,
\end{align*}
where the last equality uses that $\frac{(1 + 2^{-i})^{l}}{c^{o(l)}} = \omega(1)$ and $\frac{1}{c^{o(l)}} = O(1)$ with respect to $l \to \infty$. Therefore $d_\gamma$ is not input $g$-continuous for these choices of $g$.
\end{proof}

\begin{figure}
\caption{Two weighted automata with $\Sigma=\{a\}$ and initial weight $\alpha=1$.}
\label{fig:WAinputcts}
\begin{center}
\begin{tikzpicture}
[->,>=stealth',shorten >=1pt,auto,node distance=4.3cm,
                    semithick]
  \tikzstyle{every state}=[]

  \node[state, circle split]	(q1) 	
  {$q_1$ \nodepart{lower} $1$};
  \node[state, circle split]	(q2) [right of=q1] 
  {$q_2$ \nodepart{lower} $1$};

  \path (q1) edge [loop right]          node {a:$\tau_i$} (q1)
        (q2) edge [loop right]          node {a:$1$}      (q2)
        ;
\end{tikzpicture}
\end{center}
\end{figure}
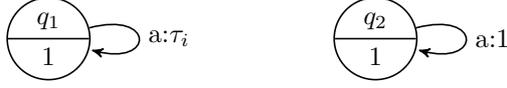

\section{Proofs from Section~\ref{sec:learning}}\label{sec:applearning}

\begin{proof}[Proof of Theorem~\ref{thm:spectral}]
Let $M = \{\tau_\sigma\}_{\sigma \in \Sigma}$ and let $\norm{\cdot}$ be a norm on $V$ obtained from Theorem~\ref{thm:quasiextremal} with $M$ and a small enough constant $\eta > 0$. Let $\hat{M} = \{\tau_\sigma\}_{\sigma \in \Sigma} \cup \{\hat{\tau}_\sigma\}_{\sigma \in \Sigma}$. Let $d_H$ denote the Hausdorff distance between sets of linear operators induced by $\norm{\cdot}$. Since $M$ is irreducible we can use the local Lipschitz continuity of the joint spectral radius to see that there exists a constant $c_M > 0$ depending only on $M$ such that the following holds:
\begin{align*}
|\rho(M) - \rho(\hat{M})| &\leq c_M d_H(M,\hat{M}) = c_M \max\left\{\sup_{\tau \in M} \inf_{\tau' \in \hat{M}} \norm{\tau - \tau'}, \sup_{\tau' \in \hat{M}} \inf_{\tau \in M} \norm{\tau - \tau'} \right\} \\
&\leq c_M \max_{\sigma \in \Sigma} \norm{\tau_\sigma - \hat{\tau}_\sigma} \enspace.
\end{align*}
Note that by Lemma~\ref{lem:learn} we have $\max_{\sigma \in \Sigma} \norm{\tau_\sigma - \hat{\tau}_\sigma} = O(\norm{H - \hat{H}}_2)$. Thus, by making $\norm{H - \hat{H}}_2$ small enough we can assume that $\gamma \rho(\hat{M}) < 1$. Using this fact and our choice of $\eta$ we can apply Lemma~\ref{lem:boundd} to see that $d_{\gamma}(A,\hat{A}) \leq O(\norm{H - \hat{H}}_2)$. Furthermore, the hidden constants in the big-$O$ notation depend on: the Hankel block indices $B = (P,S)$ and the size of the alphabet $|\Sigma|$ through Lemma~\ref{lem:learn}; on $A$ through Lemma~\ref{lem:learn}, the norm $\norm{\cdot}$, and the constant $c_M$; and on $\gamma$ through Lemma~\ref{lem:boundd}.
\end{proof}

\end{document}